\newcommand{\Omit}[1]{}
\definecolor{mybeautifulcyan}{rgb}{0.08,0.58,0.89}
\definecolor{michel}{rgb}{0.74, 0.2, 0.64} % violet
\newcommand{\R}{\mathbb{R}}
\newcommand{\Rplus}{\R^+}
\newcommand{\union}{\cup}
\newcommand{\var}[1]{#1}
\newcommand{\ens}[1]{\mathcal{#1}}
\newcommand{\vect}[1]{\overrightarrow{#1}}
\newcommand{\Def}{\buildrel\hbox{\tiny \textit{def}}\over =}
\DeclareMathOperator{\argmax}{argmax}
\newtheorem{definition}{Definition}
\newtheorem{prop}{Proposition}
\newtheorem{example}{Example}
\newtheorem{remark}{Observation}
\newtheorem{corollary}{Corollary}
\newcommand{\Ag}{A} 
\newcommand{\Ob}{O}
\newcommand{\ag}{i}
\newcommand{\agp}{j}
\newcommand{\ob}{\ell}
\newcommand{\obp}{m}
\newcommand{\obpp}{p}
\newcommand{\Agents}{{\ens{\Ag}}}
\newcommand{\maxAgents}{{N}}
\newcommand{\Objets}{{\ens{\Ob}}}
\newcommand{\maxObjets}{{M}}
\newcommand{\Poids}{W}
\newcommand{\poids}{W}
\newcommand{\share}{\pi}
\newcommand{\Shares}{\vect{\share}}
\newcommand{\sharep}{\rho}
\newcommand{\Sharesp}{\vect{\sharep}}
\newcommand{\shareq}{\tau}
\newcommand{\price}{p}
\newcommand{\Prices}{\vect{\price}}
\newcommand{\Ut}{u}     % utilité individuelle
\newcommand{\llb}{\llbracket}
\newcommand{\rrb}{\rrbracket}
\newcommand{\conp}{\textsf{coNP}\xspace}
\newcolumntype{L}{>{\raggedright\arraybackslash}X}
\newcolumntype{R}{>{\raggedleft\arraybackslash}X}
\newcolumntype{C}{>{\centering\arraybackslash}X}
\newcounter{problem}
\renewcommand{\@algocf@procname}{Procédure}
\newcommand{\SI}{{\cal S}(I)}
\newcommand{\PI}{{\cal P}(I)}
\newcommand{\instance}{\langle \Agents, \Objets, \poids \rangle}
\newcommand{\sequence}{\sigma} % notation de base pour une sequence
\newcommand{\seq}{\vect{\sequence}} 
\newcommand{\seqi}{\sequence}
\newcommand{\best}{\mathrm{best}}
\newcommand{\soscs}{sequences of sincere choices\xspace}
\newcommand*{\al}[1]{
\begin{tikzpicture}
[baseline=(letter.base)]
[every text node part/.style={font=\rm}],
\node[draw,circle,inner sep=1pt] (letter) {$\mathrm #1$};
\end{tikzpicture}
}
\title{Efficiency and Sequenceability in Fair Division of Indivisible Goods with Additive Preferences}
\author{Sylvain Bouveret and Michel Lema\^itre}
\begin{document}

%%%%%%%%%%%%%%%%%%%%%%%%%%%%%%%%%%%%%%%%%%%%%%%%%%%%%%%%%%%%%%%%%%%%%%%%%

% Include a short abstract here (100-300 words):
\begin{abstract}
  In fair division of indivisible goods, using sequences of sincere
  choices (or picking sequences) is a natural way to allocate the
  objects. The idea is the following: at each stage, a designated
  agent picks one object among those that remain.  This paper,
  restricted to the case where the agents have numerical additive
  preferences over objects, revisits to some extent the seminal paper
  by Brams and King \cite{BramsKing05} which was specific to ordinal
  and linear order preferences over items. We point out similarities
  and differences with this latter context.  In particular, we show
  that any Pareto-optimal allocation (under additive preferences) is
  sequenceable, but that the converse is not true anymore.  This
  asymmetry leads naturally to the definition of a ``scale of
  efficiency'' having three steps: Pareto-optimality, sequenceability
  without Pareto-optimality, and non-sequenceability.  Finally, we
  investigate the links between these efficiency properties and the
  ``scale of fairness'' we have described in an earlier work
  \cite{Bouveret2015CharacterizingConflicts}: we first show that an
  allocation can be envy-free and non-sequenceable, but that every
  competitive equilibrium with equal incomes is sequenceable. Then we
  experimentally explore the links between the scales of efficiency
  and fairness.
\end{abstract}

%%%%%%%%%%%%%%%%%%%%%%%%%%%%%%%%%%%%%%%%%%%%%%%%%%%%%%%%%%%%%%%%%%%%%%%%%

\section{Introduction}

In this paper, we investigate fair division of indivisible goods. In
this problem, a set of indivisible objects or goods has to be
allocated to a set of agents (individuals, entities...), taking into
account, as best as possible, the agents' preferences about the
objects. This classical collective decision making problem has plenty
of practical applications, among which the allocation of space
resources \cite{Lemaitre99,Bianchessi07}, of tasks to workers in
crowdsourcing market systems, papers to reviewers
\cite{goldsmith2007ai} or courses to students \cite{Budish11}.

This problem can be tackled from two different perspectives. The first
possibility is to resort to a benevolent entity in charge of
collecting in a \emph{centralized} way the preferences of all the
agents about the objects. This entity then computes an allocation that
takes into account these preferences and satisfies some fairness
(\textit{e.g.} envy-freeness) and efficiency (\textit{e.g.}
Pareto-optimality) criteria, or optimizes a well-chosen social welfare
ordering.  The second possibility is to have a \emph{distributed}
point of view, \textit{e.g.} by starting from an initial allocation
and letting the agents negotiate to swap their objects
\cite{Sandholm98,Chevaleyre05ijcai}.

A somewhat intermediate approach consists in allocating the objects to
the agents using a \emph{protocol}, which can be seen as a
way of building an allocation interactively by asking the agents a
sequence of questions. Protocols are at the heart of works mainly
concerning the allocation of divisible resources (cake-cutting) --- see
Brams and Taylor's seminal book \cite{Brams96} for a reference --- but
have also been studied in the context of indivisible goods
\cite{Brams96,brams2012undercut}.

In this paper, we focus on a particular allocation protocol:
\emph{sequences of sincere choices} (also known as \emph{picking
  sequences}). This very simple and natural protocol works as follows.
A central authority chooses a sequence of agents before the protocol
starts, having as many agents as the number of objects (some agents
may appear several times in the sequence). Then, each agent appearing
in the sequence is asked to choose 
in turn one object among
those that remain. For instance, according to the sequence
$\langle 1,2,2,1 \rangle$, agent $1$ is going to choose first, then
agent $2$ will pick two consecutive objects, and agent $1$
will take the last object. This simple protocol, actually used in a
lot of everyday situations,\footnote{It is for instance
  used in the board game \textit{The Settlers of Catane} for
  allocating initial resources.} has been studied for the first time
by Kohler and Chandrasekaran \cite{KC71}. Later, Brams and Taylor
\cite{Brams00} have studied a particular version of this protocol,
namely alternating sequences, in which the sequence of agents is
restricted to a balanced ($\langle 1,2,2,1...\rangle$) or strict
($\langle 1,2,1,2...\rangle$) alternation of agents. Bouveret and Lang
\cite{BL-IJCAI11} have further formalized this protocol, whose
properties (especially related to game theoretic aspects) have been
characterized by Kalinowski {\em et al.}
\cite{KalinowskiNW13,KalinowskiNWX13}. Finally, Aziz \emph{et al.}
\cite{Aziz2015possible} have studied the complexity of problems
related to finding whether a particular assignment (or bundle)
is achievable by a particular class of picking sequences.

On top of all these works specifically dedicated to this
protocol, we can add the interesting work by Brams \textit{et al.}
\cite{BramsKing05}. This work, which focuses on a situation where
the agents have ordinal preferences, is not specifically dedicated
to picking sequences. However, the authors make an interesting link
between this protocol and Pareto-optimality, showing, among
others, that picking sequences always result in a Pareto-optimal
allocation, but also that every Pareto-optimal allocation can be
obtained in this way.

In this paper, we will elaborate on these ideas and analyze the links
between sequences and some efficiency and fairness properties, in a
more general model in which the agents have numerical additive
preferences on the objects, without any further restriction. Our main
contributions are the following.  We give a formalization of the link
between allocation and \soscs, %by introducing the notion of \emph{sequenceability}, and 
highlighting a simple characterization of
the sequenceability of an allocation. Then, we show that in this
slightly more general framework than the one by Brams \emph{et al.},
surprisingly, Pareto-optimality and sequenceability are not equivalent
anymore. As a consequence we can define a ``scale of efficiency'' that
allows us to characterize the degree of efficiency of a given
allocation. We also highlight an interesting link between
sequenceability and another important economical concept: the
competitive equilibrium from equal income (CEEI). Another contribution
is the experimental exploration of the links between the scale of
efficiency and fairness properties, which has led us to develop, among
others, a practical method for testing if a given allocation passes
the CEEI test --- a practical problem which was still open
\cite{Bouveret2015CharacterizingConflicts}.

The paper is organized as follows. Section~\ref{sec:modele} describes
the model of our allocation problem and the allocation protocol based
on \soscs. Section~\ref{sec:seqAlloc} focuses on the problem
consisting in deciding whether an allocation can be obtained by a
sequence of sincere choices. In the next two sections, we analyze the
relation between sequences of sincere choices and three classical
properties: Pareto-optimality (Section~\ref{sec:Pareto}),
envy-freeness and competitive equilibrium from equal income
(Section~\ref{sec:CEEI}). Finally, we explore experimentally in
Section~\ref{sec:expe} the links between the ``scale of efficiency''
mentioned above, and the ``scale of fairness'' that we have
described in a previous work \cite{Bouveret2015CharacterizingConflicts}.

%%%%%%%%%%%%%%%%%%%%%%%%%%%%%%%%%%%%%%%%%%%%%%%%%%%%%%%%%%%%%%%%%%%%%%%%%

\section{Model and Definitions}
\label{sec:modele}

The aim the fair division of indivisible goods, also called MultiAgent
Resource Allocation (MARA), is to allocate a finite set of
\emph{objects} $\Objets = \{1,\dots,\maxObjets\}$ to a finite set of
\emph{agents} $\Agents = \{1,\dots,\maxAgents\}$. A
\emph{sub-allocation} on $\Objets' \subseteq \Objets$ is a vector
$\Shares^{|\Objets'} = \langle
\share_1^{|\Objets'},\dots,\share_{\maxAgents}^{|\Objets'}\rangle$,
such that
$\ag \neq \agp
\Rightarrow\share_\ag^{|\Objets'}\cap\share_\agp^{|\Objets'}=\emptyset$
(a given object cannot be allocated to more than one agent) and
$\union_{\ag\in\Agents} \share_\ag^{|\Objets'} = \Objets'$ (all the
objects from $\Objets'$ are allocated).
$\share_\ag^{|\Objets'} \subseteq \Objets'$ is called agent $\ag$'s
\emph{share} on $\Objets'$.  $\Shares^{|\Objets''}$ is a
sub-allocation of $\Shares^{|\Objets'}$ when
$\share^{|\Objets''}_\ag \subseteq \share^{|\Objets'}_\ag$ for each
agent $\ag$. Any sub-allocation $\Shares^{|\Objets} $ on the entire
set of objects will be denoted $\Shares $ and just called
\emph{allocation}.

Any satisfactory allocation must take into account the agents'
preferences on the objects. Here, we will make the classical
assumption that these preferences are \emph{numerically additive}.
Each agent $\ag$ has a \emph{utility function} $\Ut_\ag:2^\Objets \to \Rplus$
measuring her satisfaction $\Ut_\ag(\share)$ when she obtains share $\share$,
which is defined as follows:
\begin{equation*} \label{eq:utilite} \Ut_\ag(\share) \Def
  \sum_{\ob\in\share}\poids(\ag,\ob),
\end{equation*}
where $\poids(\ag, \ob)$ is the weight given by agent $\ag$ to object
$\ob$. This assumption, as restrictive as it may seem, is made by a
lot of authors \cite[for instance]{Lipton04,Bansal06} and is
considered a good compromise between expressivity and conciseness.

If we put things together:
\begin{definition}
  An instance of the \emph{additive multiagent resource allocation
    problem} (add-MARA instance for short) $I = \instance$ is a tuple
  $(\Agents, \Objets, \poids)$, where:
  \begin{itemize}
  \item $\Agents = \{1, \dots ,\ag, \dots, \maxAgents\}$ is a set of
    $\maxAgents$ \emph{agents};
  \item $\Objets = \{1, \dots ,\ob, \dots \maxObjets\}$ is a set of
    $\maxObjets$ \emph{objects},
  \item $\poids: \Agents \times \Objets \to \Rplus$ is a mapping,
    $\poids(\ag, \ob)$ being the weight given by agent $\ag$ to
    object $\ob$.
  \end{itemize}
  We will denote by $\PI$ the set of allocations for $I$.
\end{definition}

We will say that \emph{the agents' preferences are strict on the
  objects} if, for each agent $\ag$ and each pair of objects
$\ob \neq\obp$, we have $\poids(\ag, \ob) \neq \poids(\ag, \obp)$.
Similarly, we will say that \emph{the agents' preferences are strict
  on the shares} if, for each agent $\ag$ and each pair of shares
$\share\neq\share'$, we have $\Ut_\ag(\share) \neq \Ut_\ag(\share')$.
Finally, we will say that \emph{the agents have same order
  preferences} \cite{Bouveret2015CharacterizingConflicts} if there is
a permutation $\eta : \Objets \mapsto \Objets$ such that for each
agent $\ag$ and each pair of objects $\ob$ and $\obp$, if
$\eta(\ob) < \eta(\obp)$ then
$\poids(\ag, \eta(\ob)) \geq \poids(\ag, \eta(\obp))$.

\begin{remark} \label{rem:str} Stricticity on shares implies
  stricticity on objects, but the converse is false.
\end{remark}

The following definition will play a prominent role.

\begin{definition}
  Given an agent $\ag$ and a set of objects $O$, let
  $\best(O, \ag) = \argmax_{\ob\in O} \poids(\ag, \ob)$ be the subset 
  of objects in $O$ having the highest weight for agent $i$ (such
  objects will be called \emph{top} objects of $i$).  A
  (sub-)allocation $\Shares^{|\Objets'}$ is said \emph{frustrating} if
  no agent receives one of her top objects in $\Shares^{|\Objets'}$
  (formally:
  $\best(\Objets', \ag) \cap \Shares^{|\Objets'}_\ag = \emptyset$ for
  each agent $\ag$), and \emph{non-frustrating} if at least one agent
  receives one of her top objects in $\Shares^{|\Objets'}$.
\end{definition}

It should be emphasized that this notion of frustrating allocation was
already present but implicit in the work by Brams and King
\cite{BramsKing05}. Here, we bring this concept out because it will
lead to a nice characterization of sequenceable allocations, as we
will see later.

\medskip

In the following, we will consider a particular way of allocating
objects to agents: allocation by sequences of sincere
choices. Informally the agents are asked in turn, according to a
predefined sequence, to choose and pick a top object among the
remaining ones.

\begin{definition} Let $I = \instance$ be an add-MARA instance. A
  \emph{sequence of sincere choices} (or simply \emph{sequence} when
  the context is clear) is a vector of $\Agents^\maxObjets$.
We will denote by $\SI$ the set of possible sequences for instance $I$.
\end{definition}

Let $\seq \in \SI$. $\seq$ is said to \emph{generate} allocation
$\Shares$ if and only if $\Shares$ can be obtained as a possible
result of the non-deterministic\footnote{The algorithm contains an
  instruction $\Choose$ splitting the control flow into several
  branches, building all the allocations generated by $\seq$.}
Algorithm~\ref{algo:ss} on input $I$ and $\seq$.

\begin{algorithm}
  \caption{Execution of a sequence}
  \label{algo:ss}
  \Input{an instance $I = \instance$ and a sequence $\seq \in \SI$}
  \Output{an allocation $\Shares \in \PI$}

  $\Shares \gets$ empty allocation (such that $\forall \ag \in \Agents : \share_\ag = \emptyset$)\;
  $\Objets_1 \gets \Objets$\;
  \For{$t$ \From 1 \To $\maxObjets$}{
    $\ag \gets \seqi_t$\;
    \Choose object $o_t \in \best(\Objets_t, \ag)$\label{line:guess} \;
    $\share_\ag \gets \share_\ag  \union \{ o_t \} $ \;
    $\Objets_{t+1} \gets \Objets_t \setminus  \{ o_t \} $
  }
\end{algorithm}

\begin{definition} \label{def:sequenceable} An allocation $\Shares$ is
  said to be \emph{sequenceable} if there exists a sequence $\seq$
  that generates $\Shares$, and \emph{non-sequenceable} otherwise. For
  a given instance I, we will denote by $s(I)$ the binary relation
  defined by $(\seq,\Shares)\in s(I)$ if and only if $\Shares$ can be
  generated by $\seq$.
\end{definition}

\begin{example} \label{ex:A1} Let $I$ be the instance represented by
  the following weight matrix:\footnote{In this example an in the
    following ones, we will represent instances by a matrix in which
    each value at row $\ag$ and column $\ob$ represents the weight
    $W(\ag, \ob)$. We will also use notation $ab...$ as a shorthand
    for $\{a, b, ...\}$}
  \[  \left(
    \begin{array}{rrr}
      8 & 2 & 1 \\
      5 & 1 & 5 
    \end{array}
  \right) \]
  The binary relation $s(I)$ between $\SI$ and $\PI$ can be graphically
  represented as follows:

  \vspace{-0.2cm}  
  \begin{center}
    \begin{tikzpicture}
      \draw (-1.5, 0) node[circle, inner sep=2pt, outer sep=5pt] (node-S1) {} node[above, outer sep=5pt] {$\SI$\quad$\to$};

      \draw (0, 0) node[circle, fill=black, inner sep=2pt, outer sep=5pt] (node-a) {} node[above, outer sep=5pt] {$\langle 1, 1, 1\rangle$};
      \draw (1.5, 0) node[circle, fill=black, inner sep=2pt, outer sep=5pt] (node-b) {} node[above, outer sep=5pt] {$\langle 1, 1, 2\rangle$};
      \draw (3, 0) node[circle, fill=black, inner sep=2pt, outer sep=5pt] (node-c) {} node[above, outer sep=5pt] {$\langle 1, 2, 1\rangle$};
      \draw (4.5, 0) node[circle, fill=black, inner sep=2pt, outer sep=5pt] (node-d) {} node[above, outer sep=5pt] {$\langle 1, 2, 2\rangle $};
      \draw (6, 0) node[circle, fill=black, inner sep=2pt, outer sep=5pt] (node-e) {} node[above, outer sep=5pt] {$\langle 2, 1, 1\rangle$};
      \draw (7.5, 0) node[circle, fill=black, inner sep=2pt, outer sep=5pt] (node-f) {} node[above, outer sep=5pt] {$\langle 2, 1, 2\rangle$};
      \draw (9, 0) node[circle, fill=black, inner sep=2pt, outer sep=5pt] (node-g) {} node[above, outer sep=5pt] {$\langle 2, 2, 1\rangle$};
      \draw (10.5, 0) node[circle, fill=black, inner sep=2pt, outer sep=5pt] (node-h) {} node[above, outer sep=5pt] {$\langle 2, 2, 2\rangle$};

      \draw (-1.5, -0.7) node[circle, inner sep=2pt, outer sep=5pt] (node-P1) {} node[below, outer sep=5pt] {$\PI$\quad$\to$};

      \draw (0, -0.7) node[circle, fill=black, inner sep=2pt, outer sep=5pt] (node-A) {} node[below, outer sep=5pt] {$\langle  123, \emptyset \rangle$};
      \draw (1.5, -0.7) node[circle, fill=black, inner sep=2pt, outer sep=5pt] (node-B) {} node[below, outer sep=5pt] {$\langle  12 , 3 \rangle$};
      \draw (3, -0.7) node[circle, fill=black, inner sep=2pt, outer sep=5pt] (node-C) {} node[below, outer sep=5pt] {$\langle  13, 2 \rangle$};
      \draw (4.5, -0.7) node[circle, fill=black, inner sep=2pt, outer sep=5pt] (node-D) {} node[below, outer sep=5pt] {$\langle  1 , 23 \rangle$};
      \draw (6, -0.7) node[circle, fill=black, inner sep=2pt, outer sep=5pt] (node-E) {} node[below, outer sep=5pt] {$\langle  23 , 1 \rangle$};
      \draw (7.5, -0.7) node[circle, fill=black, inner sep=2pt, outer sep=5pt] (node-F) {} node[below, outer sep=5pt] {$\langle 2 ,13 \rangle$};
      \draw (9, -0.7) node[circle, fill=black, inner sep=2pt, outer sep=5pt] (node-G) {} node[below, outer sep=5pt] {$\langle  3 , 12 \rangle$};
      \draw (10.5, -0.7) node[circle, fill=black, inner sep=2pt, outer sep=5pt] (node-H) {} node[below, outer sep=5pt] {$\langle  \emptyset , 123 \rangle$};

      \foreach \x/\y in {a/A, b/B, c/B, d/D, e/B,  e/E, f/D, f/F, g/F, h/H}
      \draw (node-\x.center) -- (node-\y.center);
    \end{tikzpicture}
  \end{center}

  \vspace{-0.3cm}
  For instance, sequence $\langle 2, 1, 2 \rangle$ generates two
  possible allocations: $\langle 1, 23 \rangle$ and
  $\langle 2, 13 \rangle$, depending on whether agent 2 chooses object
  1 or 3 that she both prefers. Allocation $\langle 12, 3 \rangle$ can
  be generated by three sequences. Allocations $\langle 13, 2 \rangle$
  and $\langle 3, 12 \rangle$ are non-sequenceable.
\end{example}

Once again, this notion of sequenceability is already implicitly
present in the work by Brams and King \cite{BramsKing05}, and has been
extensively studied by Aziz \emph{et al.}
\cite{Aziz2015possible}. However, a fundamental difference is that in
our setting, the preferences might be non strict on objects, which
entails that the same sequence can yield different allocations (in the
worst case, an exponential number), as Example~\ref{ex:A1} shows.
 
\begin{remark} \label{rem:card} For any instance $I$,
  $|\SI| = |\PI| = \maxAgents^\maxObjets$.
\end{remark}

\begin{remark} The number of objects allocated to an agent by a
  sequence is equal to the number of times the agent appears in the
  sequence. Formally: for all $(\seq,\Shares)\in s(I)$ and all
  agent $\ag$,
  $|\share_\ag| = \sum_{\ob\in\Objets}[\seqi_\ob = \ag]$ where
  $[z = t]$ is 1 if the equality is verified, and 0 otherwise.
\end{remark}

%%%%%%%%%%%%%%%%%%%%%%%%%%%%%%%%%%%%%%%%%%%%%%%%%%%%%%%%%%%%%%%%%%%%%%%%%

\section{Sequenceable allocations}
\label{sec:seqAlloc}

In this section and in the following one, we will give a
characterization of sequenceable allocations, that is, we will try to
identify under which conditions an allocation is achievable by the
execution of a sequence of sincere choices. The question has already
been extensively studied by Aziz \emph{et al.}
\cite{Aziz2015possible}, but in a quite different context --- namely,
ordinal strict preferences on objects --- and with a particular focus
on sub-classes of sequences (\textit{e.g.}  alternating sequences). As
we will show, the properties are not completely similar in our
context.

\subsection{Characterization}
 
We have seen in Example~\ref{ex:A1} that some allocations are
non-sequenceable. We will now formalize this and give a precise
characterization of sequenceable allocations. We first start by
noticing that in every sequenceable allocation, the first agent of the
sequence gets a top object, which yields the following remark:

\begin{remark} \label{rem:nonSeq} Every frustrating allocation is
  non-sequenceable.
\end{remark}
 
\begin{example} \label{ex:B} In the following instance,
  the circled allocation $\langle 23, 1 \rangle$ is
  non-sequenceable because it is frustrating.
  \[  \left(
    \begin{array}{ccc}
      2 & \al 1  & \al 1\\
      \al 1 & 2 & 2
    \end{array}
  \right) \]
\end{example}

However, it is possible to find a non-sequenceable allocation that
gives her top object to one agent (as allocation
$\langle 13, 2 \rangle$ in Example~\ref{ex:A1}) or even to all, as the
following example shows.

\begin{example} \label{ex:C} Consider this instance: 
  \[  \left(
    \begin{array}{cccc}
      \al 9 & 8 & \tikz[remember picture]\node[inner sep=0pt] (ul1) {};2 & \al 1 \\
      2 & \al 5 & \al 1 & 4\tikz[remember picture]\node[inner sep=0pt] (br1) {}; 
    \end{array}
  \right) \]

  \tikz[remember picture]\draw[overlay,dotted] ($ (ul1) + (-0.2, 0.4) $) rectangle ($ (br1) + (0.2, -0.15) $);

  In the circled allocation
  $\Shares=\langle 14, 23 \rangle$, every agent receives her
  top object. However, after objects 1 and 2 have been allocated (they
  must be allocated first by all sequence generating $\Shares$), the
  sub-allocation shown in a dotted box above remains.  This
  sub-allocation is obviously non-sequenceable because it is
  frustrating. Hence $\Shares$ is not sequenceable either.
\end{example}

This property of containing a frustrating sub-allocation exactly
characterizes the set of non-sequenceable allocations:% Formally:
 
\begin{prop}\label{prop:sousAlloc}
  Let $I = \instance$ be an instance and $\Shares$ be an allocation of
  this instance. The two following statements are equivalent:
  \begin{itemize}
  \item[(A)] $\Shares$ is sequenceable.
  \item[(B)] No sub-allocation of $\Shares$ is frustrating (in every
    sub-allocation, at least one agent receives a top object).
  \end{itemize}
\end{prop}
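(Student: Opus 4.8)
The plan is to establish the two implications separately. Both rely on one small observation: if $\Objets'' \subseteq \Objets'$ and $\ob \in \best(\Objets',\ag) \cap \Objets''$, then $\ob \in \best(\Objets'',\ag)$ --- a top object of a set that happens to lie in a subset is still a top object of that subset, since restricting the set of candidates can only lower the maximum weight. A second, equally elementary point, which makes statement~(B) convenient to use, is that a subset $\Objets' \subseteq \Objets$ determines a \emph{unique} sub-allocation $\Shares^{|\Objets'}$ of $\Shares$: each object of $\Objets'$ must be given to the single agent who owns it in $\Shares$, so $\share_\ag^{|\Objets'} = \share_\ag \cap \Objets'$. Hence (B) is equivalent to saying that $\Shares^{|\Objets'}$ is non-frustrating for every $\Objets' \subseteq \Objets$.

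For $(B) \Rightarrow (A)$, I would construct a generating sequence greedily, simulating Algorithm~\ref{algo:ss} together with suitable choices at the $\Choose$ instruction. Start with $\Objets_1 = \Objets$. At step $t$, the sub-allocation $\Shares^{|\Objets_t}$ is non-frustrating by~(B), so there is an agent $\ag_t$ and an object $o_t \in \best(\Objets_t,\ag_t)$ with $o_t \in \share_{\ag_t}$; set $\seqi_t = \ag_t$ and let the $\Choose$ instruction select this $o_t$, then pass to $\Objets_{t+1} = \Objets_t \setminus \{o_t\}$. After $\maxObjets$ steps every object has been handed to the agent owning it in $\Shares$, so this run of Algorithm~\ref{algo:ss} outputs exactly $\Shares$, and the sequence $\seq = \langle \ag_1, \dots, \ag_{\maxObjets} \rangle \in \SI$ generates $\Shares$.

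For $(A) \Rightarrow (B)$, I would prove the contrapositive: assuming some sub-allocation $\Shares^{|\Objets'}$ of $\Shares$ is frustrating, I show that $\Shares$ is non-sequenceable. Suppose for contradiction that a run of Algorithm~\ref{algo:ss} on some $\seq \in \SI$ yields $\Shares$, and let $t$ be the first step at which an object of $\Objets'$ is picked. No object of $\Objets'$ has been removed before step $t$, so $\Objets' \subseteq \Objets_t$. The acting agent $\ag = \seqi_t$ picks $o_t \in \best(\Objets_t,\ag)$ with $o_t \in \Objets'$; by the observation above, $o_t \in \best(\Objets',\ag)$. Since the run produces $\Shares$ we also have $o_t \in \share_\ag$, hence $o_t \in \share_\ag \cap \Objets' = \share_\ag^{|\Objets'}$. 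Therefore $\best(\Objets',\ag) \cap \share_\ag^{|\Objets'} \neq \emptyset$, contradicting the assumption that $\Shares^{|\Objets'}$ is frustrating. (This direction strictly generalizes Remark~\ref{rem:nonSeq}, which is the case $\Objets' = \Objets$.)

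There is no serious obstacle here: the argument is essentially bookkeeping on top of the definition of Algorithm~\ref{algo:ss}. The only points needing a little care are the two elementary observations isolated at the start --- in particular the uniqueness of $\Shares^{|\Objets'}$ given $\Objets'$, which lets the greedy construction in $(B)\Rightarrow(A)$ proceed without ever having to ``guess'' the right sub-allocation, and the ``first object of $\Objets'$ to be picked'' trick in $(A)\Rightarrow(B)$, which is what pins down where along the run the frustration is exposed.
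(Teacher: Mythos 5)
Your proposal is correct. The $(B)\Rightarrow(A)$ direction is essentially the paper's argument: a greedy simulation of Algorithm~\ref{algo:ss} in which, at each step, condition~(B) supplies an agent receiving a top object of the remaining set, and the uniqueness of the sub-allocation induced by a subset of objects (which you rightly isolate) guarantees the construction never stalls. Where you genuinely diverge is in $(A)\Rightarrow(B)$. The paper argues by building a precedence chain inside the frustrating set $\Objets'$ --- each object of $\Objets'$ must be preceded in the picking order by a strictly better object of $\Objets'$ for its owner, and finiteness forces a cycle, contradicting the linearity of the execution order. You instead look at the \emph{first} step of a hypothetical generating run at which an object of $\Objets'$ is picked: at that moment $\Objets'$ is still entirely available, so the picked object, being a top object of the larger remaining set and lying in $\Objets'$, is a top object of $\Objets'$ for the acting agent, who moreover owns it in $\Shares$ --- directly contradicting frustration. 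Your argument is shorter and avoids the cycle machinery altogether; the trade-off is that the paper's cycle construction is not gratuitous, since it is reused almost verbatim in the proof of Proposition~\ref{prop:POseq} to exhibit a Pareto-dominating allocation, something your ``first picked object'' trick does not yield. For this proposition in isolation, your route is the cleaner one.
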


\begin{proof} (B) implies (A). Let us suppose that for all subset of
  objects $\Objets' \subseteq \Objets$ there is at least one agent
  obtaining one of her top objects in $\Shares^{|\Objets'}$. We will
  constructively show that $\Shares$ is sequenceable. Let $\seq$ be a
  sequence of agents and $\vect{\Objets} \in (2^\Objets)^\maxObjets$
  be a sequence of sets of objects jointly defined as follows:
  \begin{itemize}
  \item $\Objets_1 = \Objets$ and $\seqi_1$ is an agent that receives
    one of her top objects in $\Shares^{|\Objets_1}$;
  \item $\Objets_{t+1} = \Objets_{t} \setminus \{o_{t}\}$, where
    $o_{t} \in \best(\Objets_{t}, \seqi_{t})$ and $\seqi_t$ is an
    agent that receives one of her top objects in
    $\Shares^{|\Objets_t}$, for $t \geq 1$.
  \end{itemize}
  
  From the assumption on $\Shares$, we can check that the sequence
  $\seq$ is perfectly defined. Moreover, $\Shares$ is one of the
  allocations generated by $\seq$.
  
  (A) implies (B) by contraposition. Let $\Shares$ be an allocation
  containing a frustrating sub-allocation
  $\Shares^{|\Objets'}$. Suppose that there exists a sequence $\seq$
  generating $\Shares$.  We can notice that in
  Algorithm~\ref{algo:ss}, when an object is allocated to an agent,
  all the objects which are strictly better for her have already been
  allocated at a previous step. Let $\ob\in\Objets'$, and let $\ag$ be
  the agent that receives $\ob$ in $\Shares$. Since
  $\Shares^{|\Objets'}$ is frustrating, there is another object
  $\obp \in \Objets'$ such that $W(\ag, \obp) > W(\ag, \ob)$. From the
  previous remark, $\obp$ is necessarily allocated before $\ob$ in the
  execution of sequence $\seq$.  We can deduce, from the same line of
  reasoning on $\obp$ and agent $\agp$ that receives it, that there is
  another object $\obpp$ allocated before $\obp$ in the execution of
  the sequence.  The set $\Objets'$ being finite, using the same
  argument iteratively, we will necessarily find an object which has
  already been encountered before. This leads to a cycle in the
  precedence relation of the objects in the execution of the
  sequence. Contradiction: no sequence can thus generate $\Shares$.
  \end{proof}

  Beyond the fact that it characterizes a sequenceable allocation, the proof of
  Proposition~\ref{prop:sousAlloc} gives
  a practical way of checking if an allocation is sequenceable, and,
  if it is the case, of computing a sequence that generates this
  allocation. This yields the following result:

  \begin{prop}
    Let $I = \instance$ be an instance and $\Shares$ be an allocation
    of this instance. We can decide in time
    $O(\maxAgents \times \maxObjets^2)$ if $\Shares$ is sequenceable.
  \end{prop}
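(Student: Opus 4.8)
The plan is to turn the constructive argument in the proof of Proposition~\ref{prop:sousAlloc} into an explicit algorithm and bound its running time. Recall that $\Shares$ is sequenceable if and only if no sub-allocation of $\Shares$ is frustrating. The naive reading of this condition would require examining all $2^{\maxObjets}$ subsets of $\Objets$, but the ``(B) implies (A)'' direction of the proof shows that we only ever need to inspect the particular decreasing chain of subsets $\Objets = \Objets_1 \supseteq \Objets_2 \supseteq \dots$ obtained by peeling off, at each step, one top object of an agent who currently receives a top object. So the algorithm is: maintain the current set $\Objets_t$ of not-yet-removed objects; at stage $t$, scan the agents to find one agent $\ag$ such that $\best(\Objets_t,\ag)\cap\share_\ag \neq \emptyset$, i.e. agent $\ag$ receives in $\Shares$ one of her current top objects; remove that object from $\Objets_t$ to get $\Objets_{t+1}$ and append $\ag$ to the sequence; repeat. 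If at some stage no such agent exists, the current sub-allocation is frustrating and we report that $\Shares$ is non-sequenceable; if we successfully exhaust all $\maxObjets$ objects, we output the sequence $\seq$ built this way, which by the proof generates $\Shares$.

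The correctness is essentially Proposition~\ref{prop:sousAlloc}, but one point deserves care: the greedy choice. When several agents receive a top object in $\Shares^{|\Objets_t}$, does it matter which one we pick? I would argue no — removing any object that is currently top for the agent who holds it cannot create a frustrating sub-allocation further down if the chain could have been completed at all; indeed, this is exactly what the ``(B) implies (A)'' construction asserts, since condition (B) is a statement about all sub-allocations, so it is preserved under removing any one such object and the construction never gets stuck. Conversely, if (B) fails, then some sub-allocation $\Shares^{|\Objets'}$ is frustrating, and by the ``(A) implies (B)'' cycle argument no sequence generates $\Shares$, so in particular our greedy procedure must get stuck at some stage; it cannot ``accidentally succeed''. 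Hence the procedure reports ``sequenceable'' exactly when $\Shares$ is sequenceable, and when it does, it returns a witnessing sequence.

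For the complexity bound: the main loop runs $\maxObjets$ times (one object removed per iteration). At each iteration we must find an agent receiving a current top object. For a fixed agent $\ag$, computing $\best(\Objets_t,\ag)$ — the maximum of $\poids(\ag,\cdot)$ over $\Objets_t$ — and checking whether any maximizer lies in $\share_\ag$ takes $O(|\Objets_t|) = O(\maxObjets)$ time. Doing this for all $\maxAgents$ agents costs $O(\maxAgents\times\maxObjets)$ per iteration, hence $O(\maxAgents\times\maxObjets^2)$ over the whole run, matching the claimed bound. (One could shave factors with sorted preference lists, but the stated bound is achieved by the straightforward implementation.)

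The main obstacle, and the only genuinely nontrivial point, is justifying that the \emph{greedy} choice at each step is safe — that we never need to backtrack over which top-object-holder to select. This is why I would lean on the fact that condition (B) in Proposition~\ref{prop:sousAlloc} quantifies over \emph{all} sub-allocations rather than over some cleverly chosen chain: whenever (B) holds, every intermediate $\Shares^{|\Objets_t}$ produced by the greedy run is itself a sub-allocation of $\Shares$ and hence non-frustrating, so an eligible agent always exists and the run terminates successfully; and whenever (B) fails, the cycle argument forbids \emph{any} generating sequence, so no run can succeed. Thus the algorithm is a faithful, backtracking-free implementation of the characterization, and the time bound follows immediately.
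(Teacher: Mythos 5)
Your proposal is correct and is essentially the paper's own proof: the paper's Algorithm~2 is exactly your greedy peeling procedure, its correctness is argued via Proposition~\ref{prop:sousAlloc} in the same way (a successful run yields a generating sequence; a stuck run exhibits a frustrating sub-allocation), and the complexity analysis ($\maxObjets$ iterations, each costing $O(\maxAgents\times\maxObjets)$ to locate an agent holding a current top object) is identical. Your explicit justification that the greedy choice never requires backtracking is a slightly more careful spelling-out of a point the paper leaves implicit, but it is the same argument.
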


  The proof of this proposition is based on the execution of a similar
  algorithm than the one which is used by Brams and King
  \cite{BramsKing05} in the proof of their Proposition 1
  (necessity). However, our algorithm is more general because (i) it
  can involve non-strict preferences on objects, and (ii) it can
  conclude with non-sequenceability.

  \begin{proof} 
    We show that Algorithm~\ref{algo:testSeq} returns a sequence
    $\seq$ generating the input allocation $\Shares$ if and only if
    there is one. Suppose that the algorithm returns a sequence
    $\seq$. Then, by definition of the sequence (in the loop from line
    \ref{lgn:defSeq1} to line \ref{lgn:defSeq2}), at each step $t$,
    $\ag =\seqi_t$ can choose an object in $\share_i$, that is one of
    her top objects. Conversely, suppose the algorithm returns
    \textit{Non-sequenceable}. Then, at a given step $t$,
    $\forall \ag$, $\best(\Objets', \ag) \cap \share_\ag = \emptyset$.
    By definition, $\Shares^{|\Objets'}$ is therefore, at this step, a
    frustrating sub-allocation of $\Shares$. By
    Proposition~\ref{prop:sousAlloc}, $\Shares$ is thus
    non-sequenceable.
    The loop from line \ref{lgn:defSeq1} to line \ref{lgn:defSeq2}
    runs in time $O(\maxAgents \times \maxObjets)$, because searching
    for the top objects in the preferences of each agent can be made
    in time $O(\maxObjets)$. This loop being executed $\maxObjets$
    times, the algorithm globally runs in time
    $O(\maxAgents \times \maxObjets^2)$.
    \begin{algorithm}
      \caption{Sequencing of an allocation}
      \label{algo:testSeq}
      \Input{an instance $I = \instance$ and an allocation $\Shares \in \PI$}
      \Output{a sequence $\seq$ generating $\Shares$ or \textit{Non-sequenceable}}
      
      $(\seq, \Objets') \gets (\langle\rangle, \Objets)$\;
      
      \For{$t$ \From 1 \To $\maxObjets$\label{lgn:defSeq1}}{
        \If{$\exists \ag$ such that $\best(\Objets', \ag) \cap \share_\ag \neq \emptyset$}{
          $\seq \gets \seq \cdot \ag$\;
          let $\ob \in \best(\Objets', \ag) \cap \share_\ag$\;
          $\Objets' \gets \Objets' \setminus \{\ob\}$\;
        }
        \lElse{
          \Return{\textit{Non-sequenceable}\label{lgn:defSeq2}}
        }
      }
      \Return $\seq$\;
    \end{algorithm}
\end{proof}

\subsection{Strict preferences on objects}

We now characterize the instances for which the relation $s(I)$
is an application.

\begin{prop} \label{prop:StrObj} Preferences are strict on objects if
  and only if the relation $s(I)$ is an application from $\SI$
  to $\PI$.
\end{prop}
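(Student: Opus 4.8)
The plan is to unpack the assertion ``$s(I)$ is an application'' into the conjunction of \emph{totality} (every sequence generates at least one allocation) and \emph{single-valuedness} (every sequence generates at most one). Totality is immediate, since Algorithm~\ref{algo:ss} always halts and returns an allocation whatever choices are made at line~\ref{line:guess}. So I would reduce the proposition to: preferences are strict on objects $\iff$ every sequence generates at most one allocation, and prove the two implications separately. Throughout I take $\maxAgents \geq 2$; for a single agent the statement is degenerate, since then $|\PI|=1$ and $s(I)$ is an application regardless of strictness.

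For the ``strict $\Rightarrow$ application'' direction I would argue directly. If preferences are strict on objects, then for every agent $\ag$ and every nonempty $O \subseteq \Objets$ the weights $\poids(\ag,\ob)$, $\ob \in O$, are pairwise distinct, so $\best(O,\ag)$ is a singleton. Hence in every run of Algorithm~\ref{algo:ss} the set $\best(\Objets_t,\seqi_t)$ occurring at line~\ref{line:guess} is a singleton, the $\Choose$ instruction never actually branches, the execution is deterministic, and a given sequence generates exactly one allocation.

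For the converse, ``application $\Rightarrow$ strict'', I would proceed by contraposition: from an agent $\ag$ and two distinct objects $\ob,\obp$ with $\poids(\ag,\ob)=\poids(\ag,\obp)=:w$, I would exhibit a single sequence generating two different allocations. The construction: let $\Objets' = \{\obpp \in \Objets : \poids(\ag,\obpp) > w\}$ with $k = |\Objets'|$ (so $k \leq \maxObjets - 2$, since $\ob,\obp \notin \Objets'$), pick some agent $\agp \neq \ag$ — this is where $\maxAgents \geq 2$ is used — and take
\[
  \seq = \langle\, \underbrace{\ag,\dots,\ag}_{k+1},\ \underbrace{\agp,\dots,\agp}_{\maxObjets-k-1}\,\rangle \in \SI .
\]
I would then check, by a short induction, that $\ag$'s first $k$ picks necessarily exhaust exactly $\Objets'$: as long as an object of $\Objets'$ remains unpicked it still has weight $>w$ for $\ag$, while every object outside $\Objets'$ has weight $\leq w$, so $\ag$'s current top object lies in $\Objets'$. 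Consequently $\Objets_{k+1} = \Objets \setminus \Objets'$, and since $\ob$ and $\obp$ both realise the maximal weight $w$ for $\ag$ inside $\Objets_{k+1}$, we get $\{\ob,\obp\} \subseteq \best(\Objets_{k+1},\ag)$. At step $k+1$ the $\Choose$ instruction may therefore put $\ob$ into $\share_\ag$, or put $\obp$ into $\share_\ag$; and since in the remaining $\maxObjets-k-1$ steps agent $\agp$ picks, one by one, all the $\maxObjets-k-1$ leftover objects, the two branches end with $\share_\ag = \Objets' \cup \{\ob\}$ and $\obp \in \share_\agp$ in one case, $\share_\ag = \Objets' \cup \{\obp\}$ and $\ob \in \share_\agp$ in the other — two genuinely distinct allocations, both generated by $\seq$. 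Hence $s(I)$ is not an application.

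The main obstacle is the bookkeeping in this last construction: I must be sure that $\ag$'s $(k+1)$-th move faces \emph{precisely} the object set $\Objets \setminus \Objets'$, which is what makes $\ob$ and $\obp$ simultaneous top objects of $\ag$ at that step, and that the tail of the sequence really does force the two allocations apart — it does, because $\agp$ is the only picker left and must take whichever of $\ob,\obp$ was not chosen at step $k+1$. Everything else is routine.
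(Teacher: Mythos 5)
Your proof is correct and follows essentially the same route as the paper's: strictness makes $\best(\Objets_t,\seqi_t)$ a singleton so the algorithm is deterministic, and for the converse you build the very same sequence (the tied agent picks $k+1$ times, where $k$ is the number of objects she strictly prefers to the tied pair, then another agent takes the rest) to force a branching choice. Your extra bookkeeping and the explicit remark that the ``only if'' direction needs $\maxAgents\geq 2$ are refinements the paper leaves implicit, but the argument is the same.
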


\begin{proof}  
  If preferences are strict on objects, then each agent has only one
  possible choice at line~\ref{line:guess} of Algorithm~\ref{algo:ss}
  and hence every sequence generates one and only one allocation.

  Conversely, if preferences are not strict on objects, at least one
  agent (suppose w.l.o.g. agent 1) gives the same weight to two
  different objects. Suppose that there is at least $t$ objects ranked
  above. Then obviously, the following sequence
  $ \underbrace{111...111}_{t + 1\text{
      times}}\underbrace{222...222}_{\maxObjets - t - 1\text{ times}}
  $
  generates two allocations, depending on agent 1's choice at step
  $t+1$.
\end{proof}

% This application is an injection and not a bijection in general,
% because there can be non-sequenceable allocations, even if preferences
% are strict on objects, as in Example~\ref{ex:C}.

\subsection{Same order preferences}

\begin{prop} \label{prop:poi} All the allocations of an instance with
  same order preferences are sequenceable. Conversely, if all the
  allocations of an instance are sequenceable, then this instance has
  same order preferences.
\end{prop}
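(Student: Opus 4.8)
The plan is to prove the two implications separately, in each case leaning on the characterization of sequenceability from Proposition~\ref{prop:sousAlloc}: an allocation is sequenceable iff none of its sub-allocations is frustrating.

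First I would prove the forward direction: if the instance has same order preferences, every allocation is sequenceable. Let $\eta$ be the permutation witnessing the same-order property, so that for every agent $\ag$ and all objects $\ob, \obp$, $\eta(\ob) < \eta(\obp)$ implies $\poids(\ag, \ob) \geq \poids(\ag, \obp)$. Fix an arbitrary allocation $\Shares$ and an arbitrary non-empty subset $\Objets' \subseteq \Objets$. I would consider the object $\ob^\star \in \Objets'$ minimizing $\eta(\ob^\star)$ over $\ob \in \Objets'$ — informally, the ``best-ranked'' object still present. By the same-order property, $\ob^\star$ is a top object for \emph{every} agent among the objects of $\Objets'$, i.e.\ $\ob^\star \in \best(\Objets', \ag)$ for all $\ag$. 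Now let $\ag^\star$ be the agent who owns $\ob^\star$ in $\Shares$; then $\ob^\star \in \best(\Objets', \ag^\star) \cap \Shares^{|\Objets'}_{\ag^\star}$, so the sub-allocation $\Shares^{|\Objets'}$ is non-frustrating. Since $\Objets'$ was arbitrary, no sub-allocation of $\Shares$ is frustrating, and Proposition~\ref{prop:sousAlloc} gives that $\Shares$ is sequenceable.

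For the converse, I would argue by contraposition: assuming the instance does \emph{not} have same order preferences, I exhibit an allocation that is not sequenceable. The negation of the same-order condition means that for every permutation $\eta$ there exist an agent and a pair of objects violating it; concretely, one can find two agents $\ag, \agp$ and two objects $\ob, \obp$ with $\poids(\ag, \ob) > \poids(\ag, \obp)$ but $\poids(\agp, \obp) > \poids(\agp, \ob)$ — the agents ``disagree'' on the relative ranking of $\ob$ and $\obp$. (Establishing that such a disagreeing pair exists from the mere failure of the same-order property is the one slightly delicate combinatorial point: I would argue that if no such pair existed, the relation ``$\ob$ is weakly preferred to $\obp$ by all agents'' would be a total preorder on $\Objets$, and any linear extension of it would yield a valid $\eta$, contradicting the assumption.) Given such $\ob, \obp, \ag, \agp$, I would build an allocation in which $\ag$ receives $\obp$, $\agp$ receives $\ob$, and all other objects are distributed arbitrarily: restricting to $\Objets' = \{\ob, \obp\}$, agent $\ag$ holds her worse object of the pair and agent $\agp$ holds hers, so every agent who owns something in $\Objets'$ fails to own a top object of $\Objets'$, making $\Shares^{|\Objets'}$ frustrating. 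By Proposition~\ref{prop:sousAlloc}, this allocation is non-sequenceable, which completes the contrapositive.

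The main obstacle I anticipate is precisely the combinatorial step in the converse — carefully deriving the existence of a ``disagreeing pair'' of objects from the failure of the same-order condition, and making sure the argument handles ties and the case of two or more agents cleanly (with a single agent the condition is trivially satisfiable, so the failure genuinely requires at least two agents with conflicting rankings). The forward direction, by contrast, should be essentially immediate once one observes that the $\eta$-minimal remaining object is simultaneously a top object for all agents.
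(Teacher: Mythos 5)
Your proof is correct and follows essentially the same route as the paper: the forward direction observes that the common-order-minimal remaining object is a top object for \emph{every} agent, so no sub-allocation can be frustrating and Proposition~\ref{prop:sousAlloc} applies, while the converse exhibits a frustrating two-object sub-allocation built from a pair of objects on which two agents disagree. If anything, you are more careful than the paper on the converse: you explicitly justify extracting a disagreeing pair from the failure of the same-order condition, and you require strict disagreement in \emph{both} directions, which is indeed what is needed for the two-object sub-allocation to be frustrating (with only one strict inequality, the indifferent agent would still receive a top object).
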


\begin{proof} Let $I$ be an instance with same order preferences, and
  let $\Shares$ be an arbitrary allocation.  In every sub-allocation
  of $\Shares$ at least one agent obtains a top object (because the
  preference order is the same among agents) and hence cannot be
  frustrating. By Proposition~\ref{prop:sousAlloc}, $\Shares$ is
  sequenceable.

       Conversely, let us assume for contradiction that $I$ is an instance
  not having same order preferences. Then there are two distinct
  objects $\ob$ and $\obp$ and two distinct agents $\ag$ and $\agp$
  such that $\Poids(\ag,\ob) \geq \Poids(\agp,\ob)$ and
  $\Poids(\ag,\obp) \leq \Poids(\agp,\obp)$, one of the two
  inequalities being strict (assume w.l.o.g. the first one). The
  sub-allocation $\Shares^{|\{\ob, \obp\}}$ such that
  $\share^{|\{\ob, \obp\}}_\ag = \{\obp\}$ and
  $\share^{|\{\ob, \obp\}}_\agp = \{\ob\}$ is frustrating. By
  Proposition~\ref{prop:sousAlloc}, every allocation $\Shares$
  containing this frustrating sub-allocation (hence such that
  $\obp \in \share_\ag$ and $\ob \in \share_\agp$) is
  non-sequenceable.
\end{proof}

Let us now characterize the instances for which the relation
$s(I)$ is a bijection.

\begin{prop} \label{prop:SOP}
  For a given instance, the following two statements are equivalent.
  \begin{itemize}
  \item[(A)] Preferences are strict on objects and same order.
  \item[(B)] The relation $s(I)$ is a bijection.
  \end{itemize}
\end{prop}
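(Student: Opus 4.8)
The plan is to prove Proposition~\ref{prop:SOP} by combining the two earlier characterizations: Proposition~\ref{prop:StrObj}, which says preferences are strict on objects iff $s(I)$ is an application (i.e. every sequence generates exactly one allocation), and Proposition~\ref{prop:poi}, which says all allocations are sequenceable iff preferences are same order. Since $|\SI| = |\PI| = \maxAgents^\maxObjets$ by Remark~\ref{rem:card}, a relation between these two finite sets of equal cardinality is a bijection precisely when it is a total function that is surjective (equivalently, a total injective function). So the strategy is to show that ``$s(I)$ is a bijection'' is equivalent to ``$s(I)$ is an application \emph{and} $s(I)$ is surjective,'' and then translate each of those two conditions via the two cited propositions.

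First I would establish (A)$\Rightarrow$(B). Assume preferences are strict on objects and same order. By Proposition~\ref{prop:StrObj}, $s(I)$ is an application from $\SI$ to $\PI$, so each sequence generates exactly one allocation. By Proposition~\ref{prop:poi}, every allocation is sequenceable, so $s(I)$ is surjective as a function. A surjective function between finite sets of the same cardinality (here $\maxAgents^\maxObjets$ by Remark~\ref{rem:card}) is a bijection; hence (B) holds.

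Next, (B)$\Rightarrow$(A). Suppose $s(I)$ is a bijection. In particular $s(I)$ is an application, so by Proposition~\ref{prop:StrObj} preferences are strict on objects. Also, being a bijection, $s(I)$ is surjective, which means every allocation in $\PI$ is generated by some sequence, i.e. every allocation is sequenceable; by Proposition~\ref{prop:poi} the instance has same order preferences. Thus (A) holds.

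The only subtlety — and the one point worth stating explicitly rather than the ``main obstacle,'' since the argument is genuinely short — is the finiteness/counting step: ``a bijection'' in the statement must be read as a bijective \emph{function}, and the equivalence of ``surjective function'' with ``bijective function'' here relies on $|\SI| = |\PI|$ from Remark~\ref{rem:card}. I would make sure to invoke Remark~\ref{rem:card} at the point where I pass from ``total surjective'' to ``bijective.'' Everything else is a direct quotation of the two preceding propositions, so no new technical machinery is needed.
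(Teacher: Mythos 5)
Your proof is correct and follows exactly the route the paper intends: the paper itself only remarks that Proposition~\ref{prop:SOP} is ``an easy consequence of Propositions~\ref{prop:StrObj} and \ref{prop:poi},'' and you supply precisely the missing glue (totality from Proposition~\ref{prop:StrObj}, surjectivity from Proposition~\ref{prop:poi}, and the counting step via Remark~\ref{rem:card} to pass from surjective function to bijection). Nothing to add.
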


The proof of this proposition is an easy consequence of
Propositions~\ref{prop:StrObj} and \ref{prop:poi}.

% \begin{proof}  (A) implies (B) : by
%   Proposition~\ref{prop:StrObj} the correspondence is an
%   application. Proposition~\ref{prop:poi} implies that it is
%   surjective. The domain and codomain of this application having the
%   same size (Remark~\ref{rem:card}), it is a bijection.

%   (B) implies (A) : the correspondence being an application, by
%   Proposition~\ref{prop:StrObj}, the agents' preferences are
%   strict on objects. Since it is a bijection, all the allocations
%   are sequenceable. By Proposition~\ref{prop:poi} the preferences
%   are same order.
% \end{proof}

%%%%%%%%%%%%%%%%%%%%%%%%%%%%%%%%%%%%%%%%%%%%%%%%%%%%%%%%%%%%%%%%%%%%%%%%%

\section{Pareto-optimality}
\label{sec:Pareto}

An allocation is Pareto-optimal if there is no other allocation
dominating it. In our context, allocation $\Shares'$ dominates
allocation $\Shares$ if for all agent $\ag$,
$\Ut_\ag(\share_\ag') \geq \Ut_\ag(\share_\ag)$ and
$\Ut_\agp(\share_\agp') > \Ut_\agp(\share_\agp)$ for at least one
agent $\agp$. Pareto-optimality formalizes the idea of efficiency:
when an allocation is Pareto-optimal, one cannot strictly increase the
utility of one agent without strictly decreasing another one's.

When an allocation is generated from a sequence, in some sense, a weak
form of efficiency is applied to build the allocation: each successive
(picking) choice is ``locally'' optimal. This raises a natural
question: is every sequenceable allocation Pareto-optimal?

Brams \textit{et al.} \cite[Proposition 1]{BramsKing05} answer
positively by proving the equivalence between sequenceability and
Pareto-optimality.  However, they have a different notion of
Pareto-optimality, because they only have partial ordinal information
about the agents' preferences. More precisely, in Brams and King's
model, the agents' preferences are given as linear orders over
\emph{objects} (\textit{e.g.}  $1 \succ 2 \succ 3 \succ 4$). To be
able to compare bundles, these preferences are lifted on subsets using
the \emph{responsive set extension} $\succ_{RS}$, which is similar to
the one defined in the work by Aziz \emph{et al.}  \cite{Aziz15fair},
and to the one defined by Bouveret \emph{et al.}  \cite{BEL-ECAI10}
for SCI-nets.\footnote{It is actually not completely clear in Brams
  and King's paper whether or not their notion of dominance extends to
  bundles of different sizes, but it seems to be implicitly the case,
  using monotonicity.} This extension leaves many bundles incomparable
(\textit{e.g.} $14$ and $23$ if we consider the order
$1 \succ 2 \succ 3 \succ 4$). It leads Bouveret \emph{et al.}
\cite{BEL-ECAI10} to define, among others, two modes of
Pareto-optimality: \emph{possible} and \emph{necessary}
Pareto-optimality. Brams and King's notion of Pareto-optimality
exactly corresponds to possible Pareto-optimality.

Aziz \emph{et al.} \cite{Aziz15fair} show that, given a linear order
$\succ$ on objects and two bundles $\share$ and $\share'$,
$\share \succ_{RS} \share'$ if and only if $u(\share) > u(\share')$
for \emph{all} additive utility function $u$ compatible with $\succ$
(that is, such that $u(\ob) > u(\obp)$ if and only if
$\ob \succ \obp$). This characterization of responsive dominance
yields the following reinterpretation of Brams and King's result:

\begin{prop}[Brams and King \cite{BramsKing05}]
  Let $\langle \succ_1, \dots, \succ_\maxAgents \rangle$ be the
  profile of agents' ordinal preferences (represented as linear
  orders). Allocation $\Shares$ is sequenceable if and only if for
  each other allocation $\Shares'$, there is a sequence
  $u_1, \dots, u_\maxAgents$ of additive utility functions,
  respectively compatible with $\succ_1, \dots, \succ_\maxAgents$
  such that $u_\ag(\Shares) > u_\ag(\Shares')$ for at least one
  agent $\ag$.
\end{prop}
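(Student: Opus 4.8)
The plan is to derive this statement by combining Brams and King's original theorem with the responsive-dominance characterization of Aziz \emph{et al.} recalled just above; essentially all the work lies in translating between the two formulations. First I would fix any add-MARA instance whose induced preference order for agent $\ag$ is $\succ_\ag$. Since preferences are then strict on objects, $\best(O,\ag)$ is always a singleton and, by Proposition~\ref{prop:StrObj}, whether $\Shares$ is sequenceable depends only on the orders $\succ_1,\dots,\succ_\maxAgents$ and coincides with Brams and King's notion of an allocation ``obtainable by a picking sequence''. I would then restate Brams and King's Proposition~1 in exactly the form we need: $\Shares$ is sequenceable if and only if it is Pareto-optimal for the responsive set extension, that is, if and only if there is \emph{no} other allocation $\Shares'$ with $\share_\ag' \succeq_{RS} \share_\ag$ for every agent $\ag$ and $\share_\agp' \succ_{RS} \share_\agp$ for at least one agent $\agp$.

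Next I would rewrite this right-hand side. The one small observation needed is that on a strict linear order, two distinct bundles cannot be $\succeq_{RS}$-comparable without being $\succ_{RS}$-comparable: if $\share \neq \share'$ and $u(\share) = u(\share')$ for some additive $u$ compatible with $\succ$, then, since the set of weight vectors inducing $\succ$ is open and full-dimensional while $\sum_{\ob\in\share} w_\ob - \sum_{\ob\in\share'} w_\ob$ is a non-zero linear functional of $w$, an arbitrarily small perturbation of $u$ inside this set makes $u(\share) < u(\share')$, contradicting $\share\succeq_{RS}\share'$ once we apply the Aziz \emph{et al.} characterization. Consequently, for each agent, ``$u_\ag(\share_\ag') \ge u_\ag(\share_\ag)$ for \emph{all} additive $u_\ag$ compatible with $\succ_\ag$'' holds exactly when $\share_\ag' = \share_\ag$ or $\share_\ag' \succ_{RS} \share_\ag$. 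It follows that ``there exists $\Shares'\neq\Shares$ responsively dominating $\Shares$'' is equivalent to ``there exists $\Shares'\neq\Shares$ such that for every compatible profile $(u_1,\dots,u_\maxAgents)$ and every agent $\ag$ one has $u_\ag(\share_\ag')\ge u_\ag(\share_\ag)$'' --- the ``at least one strict'' requirement comes for free, because $\Shares'\neq\Shares$ forces some agent's bundle to change and that agent's relation is then necessarily $\succ_{RS}$. Combining this with Brams and King's theorem (sequenceable $\iff$ no responsively dominating allocation exists) and negating the two inner universal quantifiers yields precisely the statement of the proposition: for every allocation $\Shares'\neq\Shares$ there is a compatible profile $(u_1,\dots,u_\maxAgents)$ and an agent $\ag$ with $u_\ag(\Shares) > u_\ag(\Shares')$.

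The only genuinely technical point is the perturbation argument upgrading Aziz \emph{et al.}'s strict characterization of $\succ_{RS}$ to the weak relation $\succeq_{RS}$; it is routine (a standard density argument on a full-dimensional polytope) and, if one prefers, can simply be cited. Everything else is quantifier bookkeeping, and the subtlety to watch is the order of the two nested quantifiers --- it is $\forall\,\Shares'\,\exists\,u$, not $\exists\,u\,\forall\,\Shares'$ --- which is exactly why the conclusion matches Brams and King's ``possible''-flavoured Pareto-optimality, where the cardinalization witnessing non-domination is allowed to depend on the competing allocation, rather than Pareto-optimality for a single fixed utility profile.
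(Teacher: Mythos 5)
Your derivation is correct and follows exactly the route the paper intends: the paper states this proposition as an immediate consequence of Brams and King's Proposition~1 combined with Aziz \emph{et al.}'s characterization of $\succ_{RS}$, and offers no further proof, so your contribution is simply to make explicit the quantifier bookkeeping and the upgrade from weak to strict responsive comparability that the paper leaves implicit. The only (negligible) caveat is that your perturbation step assumes the set of weight vectors compatible with a linear order is open, i.e.\ that weights are strictly positive --- a convention-level point the paper itself glosses over (cf.\ its footnote on bundles of different sizes).
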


The latter notion of Pareto-optimality is very weak, because the
additive utility function is not fixed --- we just have to find one
that works. In our context where the utility function is fixed and
hence leads to a much stronger notion of Pareto-optimality, there is
no reason to suppose that Pareto-optimality is equivalent to
sequenceability anymore. And it turns out that it is indeed not the
case, as the following example shows.

\begin{example} \label{ex:E} Let us consider the following instance:
  \[ \left(
    \begin{array}{rrr}
      5 & 4 & 2 \\
      8 & 2 & 1 
    \end{array}
  \right) \]
  
  % Let us build the table of all possible sequences with the
  % allocations they generate.

  % \begin{center}
  %   \begin{tabular}{llll}
  %     sequence & allocation & utilities & Pareto-opt.?\\ \hline
  %     $\langle 1, 1, 1\rangle$ & $  \langle  \{1,2,3\} , \{\} \rangle$  &$\langle 11,  0\rangle$ & YES\\
  %     $\langle 1, 1, 2\rangle$ & $  \langle  \{1,2\} , \{3\} \rangle$  &$\langle 9 , 1 \rangle$ & YES \\
  %     $\langle 1, 2, 1\rangle$ & $  \langle  \{1,3\} , \{2\} \rangle$  & $\langle 7, 2  \rangle$ & YES \\
  %     $\langle 1, 2, 2\rangle$ & $  A = \langle  \{1\} , \{2,3\} \rangle$  & $\langle 5, 3 \rangle$ & \textbf{NO} \\
  %     $\langle 2, 1, 1\rangle$ & $  B = \langle  \{2,3\} , \{1\} \rangle$  & $\langle 6,  8 \rangle$ & YES \\
  %     $\langle 2, 1, 2\rangle$ & $  \langle \{2\} ,\{1,3\} \rangle$  & $\langle 4,  9 \rangle$ & YES \\
  %     $\langle 2, 2, 1\rangle$ & $  \langle  \{3\} , \{1,2\} \rangle$  & $\langle 2,  10 \rangle$ & YES \\
  %     $\langle 2, 2, 2\rangle$ & $  \langle  \{\} , \{1,2,3\} \rangle$  & $\langle 0,  11 \rangle$ & YES 
  %   \end{tabular} 
  % \end{center}
  
  % \smallskip
  
  The sequence $\langle 1, 2, 2 \rangle$ generates allocation
  $A = \langle 1, 23 \rangle$ giving utilities
  $\langle 5, 3 \rangle$. $A$ is dominated by
  $B = \langle 23, 1 \rangle$, giving utilities $\langle 6, 8 \rangle$
  (and generated by $\langle 2, 1, 1 \rangle$). Observe that, under
  ordinal linear preferences, $B$ would not dominate $A$, but
  they would be incomparable.
\end{example}

The last example shows that a sequence of sincere choices does not
necessarily generate a Pareto-optimal allocation (even when the
preferences are same order and strict on shares, as the example
shows). What about the converse? We can see, as a trivial corollary of
the latter reinterpretation of Brams and King's result, that the
answer is positive \emph{if the preferences are strict on shares}. The
following result is more general:

\begin{prop} \label{prop:POseq} Every Pareto-optimal allocation is
  sequenceable.
\end{prop}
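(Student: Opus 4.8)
The plan is to prove the contrapositive: if $\Shares$ is non-sequenceable, then $\Shares$ is not Pareto-optimal. By Proposition~\ref{prop:sousAlloc}, non-sequenceability means there is a subset $\Objets' \subseteq \Objets$ such that the sub-allocation $\Shares^{|\Objets'}$ is frustrating, i.e.\ $\best(\Objets', \ag) \cap \share_\ag^{|\Objets'} = \emptyset$ for every agent $\ag$. I will use this frustrating sub-allocation to build an allocation $\Shares'$ that Pareto-dominates $\Shares$, by redistributing \emph{only} the objects in $\Objets'$ while leaving every object outside $\Objets'$ with its original owner.

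First I would restrict attention to the agents who actually hold something in $\Objets'$: let $\Agents' = \{\ag \in \Agents \tq \share_\ag^{|\Objets'} \neq \emptyset\}$. For each such agent, since she does not receive any of her top objects of $\Objets'$, there is at least one object in $\Objets'$ strictly better for her than \emph{something} she currently holds in $\Objets'$. The natural construction is a cyclic shift: build a directed graph on $\Agents'$ (or directly on the objects of $\Objets'$) where each agent points to the owner of one of her top objects in $\Objets'$; since this is a map from a finite set to itself with no fixed points of the ``stays satisfied'' kind, iterating it produces a cycle $\ag_1 \to \ag_2 \to \dots \to \ag_k \to \ag_1$ where $\ag_{s}$ covets a top object $\ob_s \in \Objets'$ currently owned by $\ag_{s+1}$. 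I then reallocate by giving $\ob_s$ to $\ag_s$ and compensating: the cleanest version is to take, for each $s$, the object $\ob_s \in \best(\Objets', \ag_s)$ held by the next agent and pass it back along the cycle, so that each $\ag_s$ in the cycle trades one of its $\Objets'$-objects for a strictly better one. Because weights are additive and strictly positive, each agent in the cycle strictly gains, every other agent keeps exactly her original share, and the result $\Shares'$ is a valid allocation (the multiset of objects is preserved). Hence $\Shares'$ dominates $\Shares$, so $\Shares$ is not Pareto-optimal.

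The step I expect to require the most care is making the cyclic exchange well-defined when an agent owns several objects in $\Objets'$ and when the ``covets'' graph has a complicated structure: I must argue that one can always extract a cycle along which each participating agent can surrender a \emph{specific} object in exchange for a strictly preferred one, without double-counting objects. The precise bookkeeping is: define $f(\ag)$ to be an agent owning some object in $\best(\Objets', \ag)$ (which exists and differs from $\ag$ by frustration); iterate $f$ from any agent to obtain a cycle $\ag_1, \dots, \ag_k$; for each $s$ pick $\ob_s \in \best(\Objets', \ag_s) \cap \share_{\ag_{s+1}}$ (indices mod $k$); these $\ob_s$ are pairwise distinct since $\ob_s \in \share_{\ag_{s+1}}$ and the $\ag_{s+1}$ are distinct; now set $\share'_{\ag_s} = (\share_{\ag_s} \setminus \{\ob_{s-1}\}) \cup \{\ob_s\}$ for each $s$ (here $\ob_{s-1} \in \share_{\ag_s}$ by construction) and $\share'_\agp = \share_\agp$ for $\agp \notin \{\ag_1,\dots,\ag_k\}$. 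One checks this is an allocation and that $\Ut_{\ag_s}(\share'_{\ag_s}) - \Ut_{\ag_s}(\share_{\ag_s}) = \poids(\ag_s, \ob_s) - \poids(\ag_s, \ob_{s-1}) > 0$ since $\ob_s$ is a top object of $\Objets'$ for $\ag_s$ while $\ob_{s-1} \in \share_{\ag_s}^{|\Objets'}$ is not — the frustration hypothesis is exactly what forces this inequality to be strict.
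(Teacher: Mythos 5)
Your proof is correct and follows essentially the same route as the paper's: prove the contrapositive via Proposition~\ref{prop:sousAlloc}, extract a frustrating sub-allocation, build a cycle in which each agent points to the holder of one of her top objects on $\Objets'$, and rotate the objects along that cycle to obtain a dominating allocation (your final paragraph correctly grounds the strict utility gain in the frustration hypothesis rather than in positivity of the weights). The only difference is presentational: you make the bookkeeping of distinct objects and the explicit form of $\share'_{\ag_s}$ slightly more formal than the paper does.
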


Before giving the formal proof, we illustrate it on a concrete example
\cite[Example 5]{Bouveret2015CharacterizingConflicts}.

\begin{example} \label{ex:EFnonPOprefsStrictes} 
 Let us consider the following instance:
  \[
  \Poids = \left(
    \begin{array}{ccccc}
      2 & 12 & \tikz[remember picture]\node[inner sep=0pt] (ul2) {};\al 7 & \dag 15 & \dag \al{11} \\
      \dag \al{12} & 15 & \dag{11} & \al  7 & 2\\
      15 & \dag \al{20} & 9 & 2 & 1 \tikz[remember picture]\node[inner sep=0pt] (br2) {};
    \end{array}  \right)
  \]
  \tikz[remember picture]\draw[overlay,dotted] ($ (ul2) + (-0.2, 0.45) $) rectangle ($ (br2) + (0.35, -0.2) $);

  The circled allocation $\Shares$ is not sequenceable: indeed, every
  sequence that could generate it should start with
  $\langle 3,2,\dots \rangle$, which leaves the frustrating sub-allocation
  $\Sharesp$ appearing in a dotted box above.

  Let us now choose an arbitrary agent who does not receive a top
  object in $\Sharesp$, for instance agent $a_1 = 2$. Let $o_1 = 3$ be
  her top object (of weight 11 in this case). The agent receiving
  $o_1$ in $\Sharesp$ is $a_2=1$. This agent prefers object $o_2= 4$
  (of weight 15), held by $a_1$, already encountered. We have built a
  cycle
  $ (a_1 , o_1) \rightarrow (a_{2} , o_{2}) \rightarrow (a_1, o_{1})$,
  in other words $(2 , 3) \rightarrow (1 , 4) \rightarrow (2, 3)$,
  that tells us exactly how to build another sub-allocation dominating
  $\Sharesp$. This sub-allocation can be built by replacing in
  $\Sharesp$ the attributions $ (a_1 \gets o_2 ) (a_{2} \gets o_1)$ by
  the attributions $ (a_1 \gets o_1) (a_{2} \gets o_{2})$. Hence, each
  agent involved in the cycle obtains a strictly better object than
  the previous one. Doing the same substitutions in the initial
  allocation $\Shares$ yields an allocation $\Shares'$ that dominates
  $\Shares$ (marked with $\dag$ in the matrix $\Poids$ above).
\end{example}

Now we will give the formal proof.\footnote{%
  This proof is similar to the proof of Brams and King
  \cite[Proposition 1, necessity]{BramsKing05}.  However, we give it
  entirely because it is more general, and because we will reuse it in
  the proof of our Proposition \ref{prop:CEEI}. Also note that the
  central idea of trading cycle is classical and is used, among
  others, by Varian \cite[page 79]{Varian1974} and Lipton \textit{et
    al.}  \cite[Lemma 2.2]{Lipton04} in the context of envy-freeness.}
\begin{proof} As stated in the example, we will now prove the
  contraposition of the proposition: every non-sequenceable allocation
  is dominated. Let $\Shares$ be a non-sequenceable allocation. From
  Proposition~\ref{prop:sousAlloc}, in a non-sequenceable allocation,
  there is at least one frustratring sub-allocation. Let $\Sharesp$ be
  such a sub-allocation (that can be $\Shares$ itself). We will, from
  $\Sharesp$, build another sub-allocation dominating it. Let us
  choose an arbitrary agent $a_1$ involved in $\Sharesp$, receiving an
  object not among her top ones in $\Sharesp$. Let $o_1$ be a top
  object of $a_1$ in $\Sharesp$, and let $a_2$ ($\neq a_1$) be the
  unique agent receiving it in $\Sharesp$. Let $o_2$ be a top object
  of $a_2$.  We can notice that $o_2 \neq o_1$ (otherwise $a_2$ would
  obtain one of her top objects and $\Sharesp$ would not be
  frustrating). Let $a_3$ be the unique agent receiving $o_2$ in
  $\Sharesp$, and so on.  Using this argument iteratively, we form a
  path starting from $a_1$ and alternating agents and objects, in
  which two successive agents and objects are distinct. Since the
  number of agents and objects is finite, we will eventually encounter
  an agent which has been encountered at a previous step of the
  path. Let $a_i$ be the first such agent and $o_k$ be the last object
  seen before her in the sequence ($a_i$ is the unique agent receiving
  $o_k$). We have built a cycle
  $ (a_i , o_i) \rightarrow (a_{i+1} , o_{i+1}) \cdots (a_k , o_{k})
  \rightarrow (a_i , o_i)$
  in which all the agents and objects are distinct, and that has at
  least two agents and two objects. From this cycle, we can modify
  $\Sharesp$ to build a new sub-allocation by giving to each agent in
  the cycle a top object instead of another less preferred object, all
  the agents not appearing in the cycle being left
  unchanged. More formally, the following attributions in $\Sharesp$ (and hence in $\Shares$):
  $ (a_i \gets o_k )  (a_{i+1} \gets o_i)  \cdots  (a_k \gets o_{k-1})$
  are replaced by:
  $ (a_i \gets o_i) (a_{i+1} \gets o_{i+1}) \cdots (a_k   \gets o_{k})$
  where $ (a \gets o ) $ means that $o$ is attributed to $a$. The same
  substitutions operated in $\Shares$ yield an allocation $\Shares'$
  that dominates $\Shares$.
\end{proof}

\begin{corollary} \label{cor:nonPO} No frustrating allocation can be
  Pareto-optimal (equivalently, in every Pareto-optimal allocation, at
  least one agent receives a top object).
\end{corollary}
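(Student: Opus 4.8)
The plan is to obtain this as an immediate consequence of the two facts already established: that frustrating allocations are non-sequenceable (Remark~\ref{rem:nonSeq}), and that every Pareto-optimal allocation is sequenceable (Proposition~\ref{prop:POseq}). Concretely, let $\Shares$ be a frustrating allocation. By Remark~\ref{rem:nonSeq}, $\Shares$ is non-sequenceable. The contrapositive of Proposition~\ref{prop:POseq} says that every non-sequenceable allocation fails to be Pareto-optimal (indeed, the proof of Proposition~\ref{prop:POseq} exhibits a dominating allocation). Hence $\Shares$ is dominated and therefore not Pareto-optimal.

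For the parenthetical equivalent formulation, I would simply unfold the definition of \emph{frustrating}: an allocation $\Shares$ is frustrating precisely when no agent receives a top object in $\Shares$, so ``no frustrating allocation is Pareto-optimal'' is logically the same statement as ``every Pareto-optimal allocation is non-frustrating'', i.e.\ at least one agent receives a top object.

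There is essentially no obstacle here, the only thing to be careful about is not to re-prove Proposition~\ref{prop:POseq}. As an aside, I would note that one can also see this directly: a frustrating allocation $\Shares$ is itself a frustrating sub-allocation (taking $\Objets' = \Objets$), so the trading-cycle construction in the proof of Proposition~\ref{prop:POseq}, applied with $\Sharesp = \Shares$, already produces an allocation $\Shares'$ dominating $\Shares$. Either route gives the corollary in one line.
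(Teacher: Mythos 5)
Your proof is correct and is exactly the intended derivation: the paper presents this as an immediate consequence of Remark~\ref{rem:nonSeq} (frustrating implies non-sequenceable) combined with the contrapositive of Proposition~\ref{prop:POseq}. Your aside about applying the trading-cycle construction directly with $\Objets' = \Objets$ is also a valid shortcut, but adds nothing beyond the chain of the two cited results.
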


Proposition~\ref{prop:POseq} implies that there exists, for a given
instance, three classes of allocations: (1) non-sequenceable
(therefore non Pareto-optimal) allocations, (2) sequenceable but non
Pareto-optimal allocations, and (3) Pareto-optimal (hence
sequenceable) allocations. These three classes define a ``scale of
efficiency'' that can be used to characterize the allocations. What is
interesting and new here is the intermediate level.

%%%%%%%%%%%%%%%%%%%%%%%%%%%%%%%%%%%%%%%%%%%%%%%%%%%%%%%%%%%%%%%%%%%%%%%%%

\section{Envy-Freeness and CEEI}
\label{sec:CEEI}

In the previous section we have investigated the link that exists between
efficiency and sequenceability. The use of sequence of sincere choices
can also be motivated by the search for a \emph{fair} allocation
protocol. Therefore, investigating the link between sequenceability
and fairness properties is a natural next step. In this section, we
will focus on two fairness properties, envy-freeness and competitive
equilibrium from equal income, and analyze their link with
sequenceability.

Envy-freeness \cite{Foley67,Varian1974} is probably one of the most
prominent fairness properties. It can be formally expressed in our
model as follows.

\begin{definition} \label{def:EF} Let $I$ be an add-MARA instance and
  $\Shares$ be an allocation.  $\Shares$ verifies the
  \emph{envy-freeness property} (or is simply \emph{envy-free}), when
  $\Ut_\ag(\share_\ag) \geq \Ut_\ag(\share_\agp)$,
  $\forall (\ag, \agp) \in \Agents^2$ (no agent strictly prefers the
  share of any other agent).
\end{definition}

The notion of \emph{competitive equilibrium} is an old and well-known
concept in economics \cite{Walras1874Elements,
  Fisher1891MathematicalInvestigations}. If equal incomes are imposed
among the stakeholders, this concept becomes the \emph{competitive
  equilibrium from equal incomes} \cite[page 177, for
example]{Moulin03}, yielding a very strong fairness concept which has
been recently explored in artificial intelligence
\cite{Othman2010,Budish11,Bouveret2015CharacterizingConflicts}. Here
is the definition of this concept adapted to our model.

\begin{definition}\label{def:CEEI}
  Let $I = (\Agents, \Objets, \poids)$ be an add-MARA instance, $\Shares$
  an allocation, and $\Prices \in [0, 1]^\maxObjets$ a vector of
  prices.  A pair $(\Shares, \Prices)$ is said to form a
  \emph{competitive equilibrium from equal incomes (CEEI)} if
  \[ 
    \forall \ag \in \Agents : \share_\ag \in \argmax_{\share \subseteq \Objets} \left\{
    \Ut_\ag(\share) :
    \sum_{\ob \in \share} \price_\ob \leq 1 \right\}.
  \]
  In other words, $\share_\ag$ is one of the maximal shares that
  $\ag$ can buy with a budget of 1, provided that the price of
  each object $\ob$ is $\price_\ob$.

  We will say that allocation $\Shares$ satisfies the CEEI test (is a
  CEEI allocation for short) if there exists a vector $\Prices$ such
  that $(\Shares, \Prices)$ forms a CEEI.
\end{definition}

As we have shown in a previous work
\cite{Bouveret2015CharacterizingConflicts}, every CEEI allocation is
envy-free in the model we use (additive, numerical preferences). In
this section, we investigate the question of whether an envy-free or
CEEI allocation is necessarily sequenceable.

For envy-freeness, the answer is negative.

\begin{prop}
  There exist non-sequenceable envy-free allocations, even if the
  agents' preferences are strict on shares.
\end{prop}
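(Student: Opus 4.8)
The plan is to exhibit a small explicit instance together with an allocation that is simultaneously envy-free, non-sequenceable, and for which preferences are strict on shares. By Proposition~\ref{prop:sousAlloc}, to make the allocation non-sequenceable it suffices to arrange that some sub-allocation (restricted to a subset of objects) is \emph{frustrating}, i.e.\ no agent gets one of her top objects among that subset; the natural way to do this with two agents is a pair of objects $\{\ob,\obp\}$ given ``crosswise'', so that each of the two agents involved receives her least-preferred of the two. Simultaneously, the full allocation must satisfy $\Ut_\ag(\share_\ag)\ge\Ut_\ag(\share_\agp)$ for every ordered pair of agents, which forces the crosswise pair to be compensated by other objects elsewhere in each agent's bundle.

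First I would pick the number of agents and objects as small as possible --- plausibly $\maxAgents=2$ and $\maxObjets=4$ (or $5$), since with two agents and few objects the envy-free constraints are easy to check by hand. I would choose the target allocation $\Shares$ first: say agent $1$ gets $\{1,3\}$ and agent $2$ gets $\{2,4\}$, and then design the weight matrix so that (i) on the sub-allocation restricted to, say, $\{3,4\}$ (or whichever pair realizes the ``cross''), agent $1$ strictly prefers $4$ to $3$ and agent $2$ strictly prefers $3$ to $4$, making $\Shares^{|\{3,4\}}$ frustrating, while (ii) objects $1$ and $2$ carry enough weight for their respective owners that no agent envies the other, and (iii) all four pairwise subset-sums within each agent's utility scale are distinct, guaranteeing strictness on shares (by Observation~\ref{rem:str} it even suffices to make the weights themselves all distinct per agent, but strictness on shares is the stronger claim in the statement, so I would verify the sums directly).

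The key steps, in order: (1) write down the candidate matrix; (2) verify via Proposition~\ref{prop:sousAlloc} that $\Shares$ is non-sequenceable, by pointing to the frustrating sub-allocation --- this is a one-line check once the cross is in place; (3) verify envy-freeness by computing the four quantities $\Ut_1(\share_1),\Ut_1(\share_2),\Ut_2(\share_2),\Ut_2(\share_1)$ and checking the two inequalities; (4) verify strictness on shares by listing the possible share-utilities for each agent and observing they are pairwise distinct. Step~(2) and step~(3) pull in opposite directions --- the cross wants the ``wrong'' object in each bundle, while envy-freeness wants each bundle to be good --- so the arithmetic must be balanced carefully.

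The main obstacle I expect is precisely reconciling non-sequenceability with envy-freeness: the frustrating pair drains value from each agent's bundle, so the remaining objects must be weighted heavily enough (and asymmetrically enough between the two agents) to restore both envy-freeness inequalities while keeping every agent's weights distinct and every share-sum distinct. If two agents and four objects turn out to be too tight, the fallback is to add a fifth object (or a third agent) to gain freedom; the paper in fact gives exactly such an instance (the five-object matrix of Example~\ref{ex:EFnonPOprefsStrictes} already witnesses a non-sequenceable, envy-free allocation with preferences strict on shares), so worst case I would simply invoke or lightly adapt that instance and perform checks (2)--(4) on it.
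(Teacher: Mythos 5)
Your proposal is correct and takes essentially the same route as the paper, whose entire proof consists of pointing to the circled allocation of Example~\ref{ex:EFnonPOprefsStrictes} and noting that it is envy-free, non-sequenceable, and has preferences strict on shares. Your sketched $2\times 4$ ``crosswise pair plus compensating objects'' construction is also realizable (e.g.\ weights $10,1,2,5$ for agent $1$ and $1,10,5,2$ for agent $2$ with allocation $\langle 13, 24\rangle$), so either path closes the argument.
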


\begin{proof} A counterexample with strict preferences on shares is
  given in Example~\ref{ex:EFnonPOprefsStrictes} above, for which we
  can check that the circled allocation $\Shares$ is envy-free
  and non-sequenceable.
\end{proof}

Interestingly, for CEEI, however, the answer is positive.

\begin{prop} \label{prop:CEEI} Every CEEI allocation is
  sequenceable.
\end{prop}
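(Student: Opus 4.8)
The plan is to prove the contrapositive, exactly as was done for Proposition~\ref{prop:POseq}: I will show that every \emph{non-sequenceable} allocation fails the CEEI test. So let $\Shares$ be non-sequenceable, and suppose for contradiction that there is a price vector $\Prices \in [0,1]^\maxObjets$ such that $(\Shares, \Prices)$ forms a CEEI. By Proposition~\ref{prop:sousAlloc}, $\Shares$ contains a frustrating sub-allocation $\Sharesp = \Shares^{|\Objets'}$. I will reuse verbatim the trading-cycle construction from the proof of Proposition~\ref{prop:POseq}: starting from an agent $a_1$ who receives a non-top object of $\Objets'$ in $\Sharesp$, take one of her top objects $o_1 \in \best(\Objets',a_1)$, let $a_2$ be the agent holding $o_1$, take $o_2 \in \best(\Objets',a_2)$, and so on, until the path of alternating distinct agents and objects closes into a cycle $(a_i,o_i)\to(a_{i+1},o_{i+1})\to\cdots\to(a_k,o_k)\to(a_i,o_i)$, in which every agent $a_{t}$ in the cycle currently holds $o_{t-1}$ but strictly prefers $o_{t}$ (with indices read cyclically), i.e. $W(a_{t}, o_{t}) > W(a_{t}, o_{t-1})$.

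The key new step — and the main obstacle — is to derive a pricing contradiction from this cycle. The idea is that the CEEI optimality condition forces, for each agent $a_{t}$ in the cycle, the object she strictly prefers to be strictly more expensive than the one she holds: $\price_{o_{t}} > \price_{o_{t-1}}$. Indeed, $\share_{a_{t}}$ maximizes $a_{t}$'s utility among bundles costing at most $1$, and it contains $o_{t-1}$; if $\price_{o_{t}} \le \price_{o_{t-1}}$, then the bundle obtained from $\share_{a_{t}}$ by swapping out $o_{t-1}$ and swapping in $o_{t}$ would still cost at most $1$ (note $o_{t}\notin\share_{a_{t}}$ since it is held by $a_{t+1}\neq a_{t}$ in $\Sharesp$, hence in $\Shares$), and it would yield strictly higher utility because $W(a_{t}, o_{t}) > W(a_{t}, o_{t-1})$ — contradicting optimality. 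Chaining these strict inequalities around the cycle gives $\price_{o_i} > \price_{o_{i+1}} > \cdots > \price_{o_k} > \price_{o_i}$, a contradiction. Hence no such price vector exists and $\Shares$ is not a CEEI allocation.

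The one point that needs a little care is the claim ``$o_{t}\notin\share_{a_{t}}$''. In the cycle, $o_{t}$ is a top object of $a_{t}$ \emph{within $\Objets'$}; the agent who holds $o_{t}$ in $\Sharesp$ is $a_{t+1}$, and the path construction guarantees $a_{t+1}\neq a_{t}$ (that is precisely why two successive agents are distinct, and why $\Sharesp$ is frustrating: $a_{t}$ does not hold any of her top objects of $\Objets'$). Since $\Sharesp$ is a sub-allocation of $\Shares$, $o_{t}\notin\share_{a_{t}}$ in $\Shares$ as well, so the swap bundle $(\share_{a_{t}}\setminus\{o_{t-1}\})\cup\{o_{t}\}$ is a genuine bundle distinct from $\share_{a_{t}}$, its price is $\sum_{\ob\in\share_{a_{t}}}\price_\ob - \price_{o_{t-1}} + \price_{o_{t}} \le 1$ under the assumption $\price_{o_{t}}\le\price_{o_{t-1}}$, and its utility strictly exceeds $\Ut_{a_{t}}(\share_{a_{t}})$. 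Everything else is just bookkeeping on the cyclic indices inherited from the proof of Proposition~\ref{prop:POseq}.
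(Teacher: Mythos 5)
Your proof is correct, and it reuses the same key construction as the paper's own proof---the trading cycle built in the proof of Proposition~\ref{prop:POseq} from a frustrating sub-allocation guaranteed by Proposition~\ref{prop:sousAlloc}---but it extracts the contradiction from the prices in a genuinely different way. The paper argues \emph{globally}: it sums the budget constraints $p(\share_{a_t}) \leq 1$ over the agents of the cycle, sums the optimality constraints $p(\share'_{a_t}) > 1$ over the same agents, and observes that the two unions of shares coincide because the cycle merely permutes objects among its agents, which makes the two summed inequalities incompatible. You argue \emph{locally}: for each agent of the cycle, optimality of her bundle within the budget forces the object she covets to be strictly more expensive than the one she holds, and a cyclic chain of strict price inequalities is absurd. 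Both are sound. Your version is slightly more elementary in that it never needs the disjointness/union-preservation step, only the single-object swap, and you correctly handle the two points where such a swap argument could go wrong: the swapped-in object $o_t$ really is outside $\share_{a_t}$ (it is held by $a_{t+1}\neq a_t$), and the swap bundle is a legitimate candidate in the $\argmax$ of Definition~\ref{def:CEEI} since that maximization ranges over all subsets of $\Objets$ within budget, not over allocations. One cosmetic slip: from $\price_{o_t} > \price_{o_{t-1}}$ the chained contradiction should read $\price_{o_i} < \price_{o_{i+1}} < \cdots < \price_{o_k} < \price_{o_i}$ (your displayed chain has the inequalities reversed relative to what you derived); the conclusion is of course unaffected.
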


It should be noted that we already know that every CEEI allocation is
Pareto-optimal if the preferences are strict on shares
\cite[Proposition 8]{Bouveret2015CharacterizingConflicts}. From
Observation~\ref{rem:str} and Proposition~\ref{prop:POseq}, if the
preferences are strict on shares, then every CEEI allocation is
sequenceable.  Proposition~\ref{prop:CEEI} is more general: no
assumption is made on the stricticity of preferences on shares (nor on
objects).  Note that a CEEI allocation can be ordinally necessary
Pareto-dominated, as the following example shows.
\[
\left(
  \begin{array}{cccc}
    \dag\al 2 & \dag 3 & 3 & \al 2\\
    2 & 3 & \dag\al 4 & 1 \\
    0 & \al 4 & 2 & \dag 4
  \end{array}
\right)
\]
The circled allocation is CEEI (with prices 0.5, 1, 1, 0.5) but is
ordinally necessary (hence also additively) dominated by the
allocation marked with $\dag$.

\begin{proof}  We will show that no allocation can be at the
  same time non-sequenceable and CEEI. Let $\Shares$ be a non-sequenceable
  allocation.  We can use the same terms and notations than in the
  proof of Proposition~\ref{prop:POseq}, especially concerning the
  dominance cycle.
  
  Let $\mathcal{C}$ be the set of agents concerned by the
  cycle. $\Shares$ contains the following shares:
\[    
    \share_{a_i} = \{ o_k \} \union \shareq_i \ \ \ \
    \share_{a_{i+1}} = \{ o_i \} \union \shareq_{i+1} \ \ \  \ .... \ \ \ \
    \share_{a_k} = \{ o_{k-1} \} \union \shareq_k 
\]
  whereas the allocation $\Shares'$ that dominates it, contains the
  following shares:
\[
    \share'_{a_i} = \{ o_i \} \union \shareq_i \ \ \ \
    \share'_{a_{i+1}} = \{ o_{i+1} \} \union \shareq_{i+1} \ \ \ \
                  .... \ \ \ \
    \share'_{a_k} = \{ o_{k} \} \union \shareq_k
\]
  the other shares being unchanged from $\Shares$ to $\Shares'$.

  Suppose that $\Shares$ is CEEI. This allocation must satisfy
  two kinds of constraints. First, $\Shares$ must satisfy the price
  constraint. If we write $p(\share) \Def \sum_{\ob\in\share} p_\ob$,
  \begin{equation} \label{eq:A}
    \forall \ag\in{\cal C} : p(\share_\ag) \leq 1
  \end{equation}
  Next, $\Shares$ must be optimal: every share having a higher utility
  for an agent than her share in $\Shares$ costs strictly more than
  1. Provided that
  $\forall \ag\in{\cal C} : \Ut_\ag(\share'_\ag) >
  \Ut_\ag(\share_\ag)$
  (because $\Shares'$ substitutes more preferred objects to less
  preferred objects in $\Shares$), this constraint can be written as
  \begin{equation} \label{eq:B}
    \forall \ag\in{\cal C} : p(\share'_\ag) > 1
  \end{equation}
  By summing equations~\ref{eq:A} and \ref{eq:B}, provided that all
  shares are disjoint, we obtain
  \begin{eqnarray*} \label{eq:C}
    p(\union_{\agp \in \cal C} \share_\agp) \leq |\cal C|
    & \text{and} &
                  p(\union_{\agp \in \cal C} \share'_\agp) > |\cal C|
  \end{eqnarray*}
  Yet,
  $\union_{\agp \in \cal C} \share_\agp=\union_{\ag \in \cal
    C}\share'_\agp$
  (because the allocation $\Shares'$ is obtained from $\Shares$ by
  simply swapping objects between agents in $\mathcal{C}$). The two
  previous equations are contradictory.
\end{proof}

%%%%%%%%%%%%%%%%%%%%%%%%%%%%%%%%%%%%%%%%%%%%%%%%%%%%%%%%%%%%%%%%%%%%%%%%%

\section{Experiments}
\label{sec:expe}

We have exhibited in Section~\ref{sec:Pareto}, a ``scale of allocation
efficiency'', made of three steps: non-sequenceable (NS), sequenceable
and non Pareto-optimal (SnP), and Pareto-optimal (PO). A natural
question is to know, for a given instance, which proportion of
allocations are located at each level of the scale. We give a first
answer in this section by experimentally characterizing the
distribution of allocations between the different levels. Moreover, we
analyze the relation between fairness and efficiency by linking this
scale of efficiency with the scale of fairness introduced in a
previous work \cite{Bouveret2015CharacterizingConflicts}. This scale
of fairness has six levels, from the weakest to the strongest
property: no criterion satisfied (--), maxmin share (MFS),
proportionality (PFS), minmax share (mFS), envy-freeness (EF) and
CEEI.

Our experimental protocol is the following. We have generated 100
add-MARA instances with 3 agents and 10 objects, reusing the uniform
and Gaussian random allocation protocol described in our previous work
\cite{Bouveret2015CharacterizingConflicts}. For each of these
instances, we have generated the set of possible allocations
($3^{10} = 59049$) and identified the highest level of fairness and
the level of efficiency of each one. We want to emphasize that a
particular difficulty was here to implement the CEEI test. This
problem has been proved to be \conp-hard \cite[Theorem
49]{Branzei2015ComputationalFair} and to the best of our knowledge no
practical method have been described yet, even for small numbers of
objects and agents. The method we have developed for this problem is
described in appendix and works in practice for problems up to 5
agents and 20 objects.

\begin{figure*}[htbp]
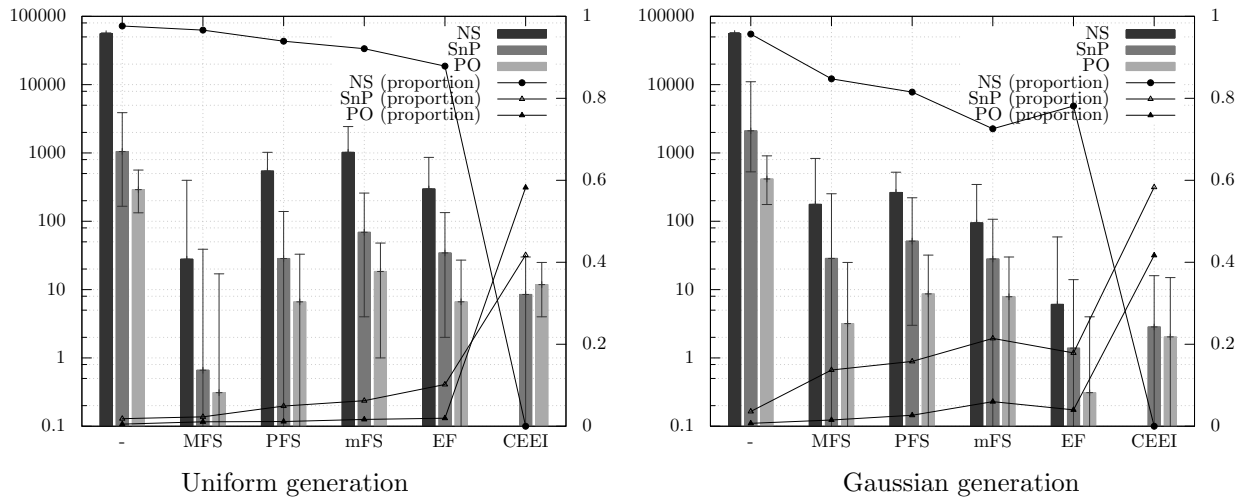

  {\parbox{0.48\textwidth}{
    \hspace{-1.5cm}\scalebox{0.7}{
      \input{graphe02}
    }\\{\hbox{}\hfill\hspace{-1.5cm} Uniform generation \hfill\hbox{}}
  }} \hfill
  \parbox{0.48\textwidth}{
    \hspace{-0.5cm}\scalebox{0.7}{
      \input{graphe01}
    }\\{\hbox{}\hfill\hspace{1.5cm} Gaussian generation \hfill\hbox{}}
  }
  \caption{Distribution of the number of allocations by pair of (efficiency, fairness) criteria.}
  \label{fig:graphe01}
\end{figure*}

Figure~\ref{fig:graphe01} gives a graphical overview of
the results.
In this figure, the histograms represent the average value (on all the
instances) of the set of allocations by pair of criteria (the min-max
interval has been represented with error bars), using a logarithmic
scale. The three curves represent the average proportion, by fairness
criterion, of the number of allocations satisfying the three
efficiency criteria, using a linear scale.

We can observe several interesting facts. First, a huge majority of
allocations do not have any efficiency nor fairness property (first
black bar on the left). Secondly, the distribution of the allocations
on the scale of efficiency seems to be correlated to the fairness
criteria: a higher proportion of sequenceable allocations can be found
among the envy-free allocations than among the allocations that do not
satisfy any fairness property, and for CEEI allocations and uniform
generation, there are even more Pareto-optimal allocations that just
sequenceable ones. Third, there is a higher proportion of allocations
satisfying fairness and efficiency properties for the uniform
generation model. Even if this seems logical for fairness (because a
``similar'' attraction for the objects --- as in the Gaussian model
--- is more likely to create conflicts among agents), the explanation
is not so clear for efficiency.

%%%%%%%%%%%%%%%%%%%%%%%%%%%%%%%%%%%%%%%%%%%%%%%%%%%%%%%%%%%%%%%%%%%%%%%%%

\section{Conclusion}
\label{sec:conclusion}

Sequences of sincere choices are arguably a remarkable protocol for
allocating a set of indivisible goods to agents. This protocol, known
for ages, has received a lot of interest in recent years by
researchers both in economics and computer science. In this paper,
following the work by Brams \textit{et al.}
\cite{BramsKing05}, we have shown that this
protocol, beyond being appealing in practice, also has a theoretical
interest in the context of numerical additive preferences. Namely, it
can be used to characterize the efficiency of an allocation by
defining an intermediate level between Pareto-optimality and no
efficiency at all. Moreover, we have introduced the simple notion of
frustrating (sub-)allocation and shown that it can be used to exactly
characterize the set of sequenceable allocations. We also have
characterized the set of instances for which there is an exact
one-to-one relation between allocations and sequences. Finally,
we have emphasized some links between fairness properties (especially
CEEI) and efficiency criteria. Although being technically simple, we
believe that these results are new and shed an interesting light on
sequences of sincere choices and Pareto-optimality.

This work opens up to some interesting questions, such as the impact
of restrictions on sequences like alternating sequences. Another
interesting topic is the relation between this protocol and social
welfare orderings, with questions such as the loss of social welfare
incurred by the execution of a sequence compared to the optimal
allocation (price of sequenceability).

\

\noindent\textbf{Acknowledgements}\quad We thank the anonymous
reviewers of AAMAS'16 for their valuable comments.

%%%%%%%%%%%%%%%%%%%%%%%%%%%%%%%%%%%%%%%%%%%%%%%%%%%%%%%%%%%%%%%%%%%%%%%%%

% References

\bibliographystyle{plain}
\bibliography{partage}

%%%%%%%%%%%%%%%%%%%%%%%%%%%%%%%%%%%%%%%%%%%%%%%%%%%%%%%%%%%%%%%%%%%%%%%%%

% At the very end of the paper, please include your contact details:

\begin{contact}
  Sylvain Bouveret\\
  Laboratoire d'Informatique de Grenoble (LIG)\\
  Grenoble INP, Université Grenoble-Alpes\\
  Grenoble, France\\
  \email{sylvain.bouveret@imag.fr}
\end{contact}

\begin{contact}
  Michel Lema\^itre\\
  Formerly ONERA Toulouse, France\\
  \email{michel.lemaitre.31@gmail.com}
\end{contact}

%%%%%%%%%%%%%%%%%%%%%%%%%%%%%%%%%%%%%%%%%%%%%%%%%%%%%%%%%%%%%%%%%%%%%%%%%

\appendix

\newpage
\section*{Appendix}

\newcommand{\PRICE}{p'}
\newcommand{\PRICES}{\vect{\PRICE}}

\newcommand{\bp}{\bar \price}
\newcommand{\bP}{\bar \PRICE}

This appendix presents an exact (complete) and practical method to
decide whether a given allocation satisfies the CEEI test (Definition
\ref{def:CEEI}). We know that this problem is \conp-hard \cite[Theorem
49]{Branzei2015ComputationalFair}. The method we propose relies on
linear programming techniques. Our ambition is limited to be able to
tackle small instances (say no more than 5 agents and 20 objects) with
standard LP solvers.% like LpSolve, GLPK or Cplex.

The starting point is the following. The problem comes down to decide
if the following system of constraints is satisfiable. This system,
denoted by $S$, depends on the allocation $\Shares$ at stake. Its
variables are the object prices
$(\var{\price_\ob})_{\ob = 1}^\maxObjets$ and it comprises three sets
of constraints:

\begin{itemize}
\item
a first set of inequalities defining the domain of prices:
\begin{align*}
 0 \leq \var{\price_\ob} \leq 1,  \text{ for all } \ob \in \llb 1, \maxObjets \rrb;
\end{align*}
\item
 a second set of  $\maxAgents$ non strict inequalities,
modelling the fact that each agent's share is affordable:
\begin{align*}
\sum_{\ob = 1}^\maxObjets A_{\ag \ob} \var{\price_\ob} \leq 1, 
\text{ for all } \ag \in \llb 1, \maxAgents \rrb,
\end{align*}
with $A_{\ag \ob} = 1$ if $\ob \in \share_\ag$, 0 otherwise;
\item a third set of $q$ strict inequalities expressing the optimality
  of $\Shares$ given the prices: any better share for an agent costs
  strictly more than the given budget. Namely: for each agent $\ag$
  and each share $\share'$ such that
  $\Ut_\ag(\share') > \Ut_\ag(\share_\ag)$,
\begin{align*}
\sum_{\ob = 1}^\maxObjets B_{k \ob} \var{\price_\ob} > 1, 
\end{align*}
where $k$ is an index, different for each pair $(i, \share')$, and
$B_{k \ob} = 1$ if $\ob \in \share'$, 0 otherwise.
\end{itemize}

Note that these constraints cannot be properly handled by a (standard)
LP tool, because the inequalities of the third set are strict. The
Fourier-Motzkin elimination method
\cite{Fourier1826SolutionQuestion,Dantzig1973Fourier-Motzkin} could do
the job, as it can manage strict inequality constraints. However, its
drawback is its exponential number of steps.

Another possibility is to build, from $S$, a new system $S'$ of
non-strict linear inequalities which is equivalent to $S$ as far as
satisfiability is concerned. This system uses the same coefficients
$A_{\ag \ob} $ and $B_{k \ob}$, and is defined as follows.
 
\begin{itemize}
\item 
replace in each inequality each variable $\var{\price_\ob}$
by a new variable  $\var{\PRICE_\ob}$, with domains
\begin{align*}
 0 \leq  \var{\PRICE_\ob}  \text{ for all } \ob \in \llb 1, \maxObjets \rrb;
\end{align*}
\item 
replace in the second set of non strict inequalities, the budget bound 1
by a new variable $\var{d}$:
\begin{align*}
\sum_{\ob = 1}^\maxObjets A_{\ag \ob} \var{\PRICE_\ob} \leq \var{d}, 
\textrm{ for all } \ag \in \llb 1, \maxAgents \rrb;
\end{align*}
\item replace the third set of strict inequalities by the new
  following set of non-strict ones:
 \begin{align*}
\sum_{\ob = 1}^\maxObjets B_{k \ob} \var{\PRICE_\ob} \geq \var{d}+1, \text{ for all } k \in \llb 1, q \rrb,
\end{align*}
\end{itemize}

\begin{prop}
  $S$ is satisfiable if and only if $S'$ is satisfiable. Consequently,
  an allocation satisfies the CEEI test if and only if the
  corresponding system $S'$ is satisfiable. Moreover, in this case,
  giving to each object $\ob$ the price
  $\displaystyle \frac{ \bar{\PRICE_\ob}} {\bar{d} }$, with
  $\bar{\PRICE_\ob}$ and $\bar{d}$ the values of $\var{\PRICE_\ob}$
  and $\var{d}$ in the solution of $S'$ yields a CEEI with
  respect to the initial allocation $\Shares$.
\end{prop}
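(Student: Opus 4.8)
The plan is to establish the two satisfiability implications separately, in each case by an explicit homogeneous rescaling, and then to read off the CEEI from any solution of $S'$. Throughout I will use that every object is allocated, so each $\ob$ lies in $\share_\ag$ for some $\ag$, i.e. $A_{\ag\ob}=1$ for that $\ag$. For the direction ``$S$ satisfiable $\Rightarrow$ $S'$ satisfiable'', let $(p_\ob)_\ob$ solve $S$; the only obstruction to it being (a scaling of) a solution of $S'$ is that the $q$ inequalities $\sum_\ob B_{k\ob}p_\ob>1$ are strict, possibly with small slack, but there are only finitely many of them. I would set $\epsilon=\min_{1\le k\le q}\bigl(\sum_\ob B_{k\ob}p_\ob-1\bigr)>0$ (taking $\epsilon=1$ if $q=0$) and put $\PRICE_\ob=p_\ob/\epsilon$, $d=1/\epsilon$. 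Then $\PRICE_\ob\ge 0$, $d>0$, $\sum_\ob A_{\ag\ob}\PRICE_\ob=\tfrac1\epsilon\sum_\ob A_{\ag\ob}p_\ob\le\tfrac1\epsilon=d$, and $\sum_\ob B_{k\ob}\PRICE_\ob=\tfrac1\epsilon\sum_\ob B_{k\ob}p_\ob\ge\tfrac{1+\epsilon}{\epsilon}=d+1$, so $(\PRICE,d)$ solves $S'$.

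For the converse, ``$S'$ satisfiable $\Rightarrow$ $S$ satisfiable'', let $(\PRICE_\ob,d)$ solve $S'$. I would first record that $d>0$ whenever $q\ge 1$: for each $\ob$, picking $\ag$ with $\ob\in\share_\ag$ gives $0\le\PRICE_\ob\le\sum_m A_{\ag m}\PRICE_m\le d$, so if $d=0$ then all $\PRICE_\ob$ vanish, contradicting $\sum_\ob B_{1\ob}\PRICE_\ob\ge d+1=1$. (The case $q=0$ is trivial: $\share_\ag$ is then utility-maximal for every $\ag$, so $\Shares$ is already a CEEI at zero prices, and one picks the solution of $S'$ with $d=1>0$.) Now set $p_\ob=\PRICE_\ob/d$: the bound $\PRICE_\ob\le d$ gives $0\le p_\ob\le 1$; dividing the second block of $S'$ by $d$ gives $\sum_\ob A_{\ag\ob}p_\ob\le 1$; dividing the third gives $\sum_\ob B_{k\ob}p_\ob\ge(d+1)/d=1+1/d>1$, strict as required. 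Hence $(p_\ob)$ solves $S$, which proves the equivalence.

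For the final claim, from a solution $(\bar{\PRICE_\ob},\bar d)$ of $S'$ with $\bar d>0$ set $p_\ob=\bar{\PRICE_\ob}/\bar d$; by the previous paragraph $\Prices\in[0,1]^\maxObjets$ and $\Prices$ satisfies $S$. Each share is affordable: $\sum_{\ob\in\share_\ag}p_\ob=\sum_\ob A_{\ag\ob}p_\ob\le 1$. And $\share_\ag$ is optimal at these prices, since any bundle $\share'$ with $\Ut_\ag(\share')>\Ut_\ag(\share_\ag)$ is, by the construction of $S$, indexed by some $k$ with $B_{k\ob}=1\iff\ob\in\share'$, whence $\sum_{\ob\in\share'}p_\ob=\sum_\ob B_{k\ob}p_\ob>1$ and $\share'$ is unaffordable; therefore $\share_\ag\in\argmax_{\share\subseteq\Objets}\{\Ut_\ag(\share):\sum_{\ob\in\share}p_\ob\le 1\}$, i.e. $(\Shares,\Prices)$ is a CEEI. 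The remaining implication — if $\Shares$ passes the CEEI test then $S$, and hence $S'$, is satisfiable — follows immediately by feeding the CEEI prices directly into $S$.

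I expect the main obstacle to be keeping the treatment of strictness exactly right: the whole point of $S'$ is to trade the $q$ strict inequalities of $S$ for non-strict ones an LP solver can handle, and the two facts that make this work — (i) a feasible price vector for $S$ can always be scaled so that each strict slack becomes a full unit relative to the budget variable $d$, and (ii) conversely $d$ is forced strictly positive as soon as one strict constraint is present, which is what legitimizes the final division by $\bar d$ — need to be isolated carefully. Once they are in hand, the rest is routine bookkeeping about homogeneous rescaling.
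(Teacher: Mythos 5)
Your proof is correct, and on the forward direction it takes a genuinely different route from the paper's. The paper's proof of ``$S$ satisfiable $\Rightarrow$ $S'$ satisfiable'' invokes the fact that a satisfiable linear system with rational coefficients has a rational solution, takes $\bar d$ to be the least common multiple of the denominators of that solution, and uses integrality (an integer strictly greater than $\bar d$ is at least $\bar d+1$) to convert the $q$ strict inequalities into the $\geq \bar d+1$ form. You instead rescale by the minimum slack $\epsilon=\min_k\bigl(\sum_\ob B_{k\ob}p_\ob-1\bigr)>0$, setting $p'_\ob=p_\ob/\epsilon$ and $d=1/\epsilon$; this is more elementary (no appeal to the existence of rational solutions) and works verbatim for irrational weights, at the price of producing a non-integral witness --- which is harmless, since only satisfiability matters. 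On the converse direction you are more careful than the paper in two places: you verify that $\bar d>0$ before dividing by it (forced by the third block of constraints when $q\geq 1$, with the trivial case $q=0$ handled separately), and you recover the domain bound $p_\ob\leq 1$ from the affordability constraints via the fact that every object belongs to some agent's share --- both points the paper's proof silently skips. Your read-off of the CEEI from a solution of $S'$, and the observation that a CEEI price vector is by construction a solution of $S$, match the paper.
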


\begin{proof}
  ($\Longrightarrow$) Suppose $S$ is satisfiable.  Then there
  is a rational solution.\footnote{ Because coefficients are
    rational. It is a known result, which derives for example from the
    Fourier-Motzkin elimination method: the only required operations
    to build the solution (if it exists) are the standard arithmetic
    operators.}  Let $(\bp_\ob)_{\ob=1}^{\maxObjets}$ be this solution
  values, and let $\bar{d}$ be the least common multiple of the
  $\bp_\ob$ denominators. On the one hand we have:
  \begin{align*}
    \sum_{\ob = 1}^\maxObjets A_{\ag \ob} \bp_\ob \leq 1 
    \implies \sum_{\ob = 1}^\maxObjets A_{\ag \ob} \bar{d}  \bp_\ob \leq \bar{d}
  \end{align*}
  and on the other hand:
  \begin{align*}
    \sum_{\ob = 1}^\maxObjets B_{k \ob} \bp_\ob > 1 
    \implies \sum_{\ob = 1}^\maxObjets B_{k \ob} \bar{d}  \bp_\ob >  \bar{d} \\ 
    \implies \sum_{\ob = 1}^\maxObjets B_{k \ob} \bar{d}  \bp_\ob \geq  \bar{d} +1,
  \end{align*}
  because
  $\displaystyle \sum_{\ob = 1}^\maxObjets B_{k \ob} \bar{d} \bp_\ob $
  and $\bar{d}$ are integers.

  Hence $S'$ is satisfiable if $S$ is, with values $ \bar{d} \bp_\ob $
  for $\var{\PRICE_\ob}$ and $\bar{d}$ for $\var{d}$.

  \smallskip

  ($\Longleftarrow$)
  Suppose $S'$ satisfiable, and let $\bP_\ob$  and  $\bar{d}$ be the values of a solution.
  Then, on the one hand we have:
  \begin{align*}
    \sum_{\ob = 1}^\maxObjets A_{\ag \ob} \bP_\ob \leq  \bar{d}
    \implies \sum_{\ob = 1}^\maxObjets A_{\ag \ob} \frac{   \bP_\ob  } { \bar{d} }  \leq 1
  \end{align*}
  and on the other hand:
  \begin{align*}
    \sum_{\ob = 1}^\maxObjets B_{k \ob} \bar{\PRICE_\ob}  \geq  \bar{d} +1 
    & \implies \sum_{\ob = 1}^\maxObjets B_{k \ob}  \bar{\PRICE_\ob} >  \bar{d} \\
    & \implies \sum_{\ob = 1}^\maxObjets B_{k \ob} \frac{   \bar{\PRICE_\ob}  } { \bar{d} }  > 1
  \end{align*}

  Hence $S$ is satisfiable if $S'$ is,  
  with values $\displaystyle \frac {\bar{\PRICE_\ob}}  {\bar{d}}  $.
\end{proof}

The system $S'$ only has non-strict linear inequalities and a standard
LP tool can be used to test its satisfiability (there is no need for
integer linear programming). Note that it does not contradict the
\conp-hardness of the CEEI test since the number of constraints is not
polynomially bounded.

\smallskip

We now conclude with some practical remarks. It is known
\cite[Proposition 7]{Bouveret2015CharacterizingConflicts} that any
allocation satisfying the CEEI test is envy-free. Moreover, we know
from Proposition~\ref{prop:CEEI} that every CEEI allocation is
sequenceable. Since both properties can be tested in polynomial time
(envy-freeness can be tested $O(N^2M)$, sequenceability in $O(NM^2)$),
we can use them in a preliminary step to filter out allocations that
do not pass them. Finally, it can be noticed that the third set of
inequalities often contains a lot of redundant constraints that can be
eliminated to simplify the overall system.

\end{document}